\newcommand*{\black}{\textcolor{black}} 
\theoremstyle{plain}
\newtheorem{thm}{Theorem}
\theoremstyle{definition}
\DeclareMathOperator{\cov}{\mathrm{cov}}
\DeclareMathOperator{\cfim}{\mathcal{F}_{\mathrm{cl}}}
\DeclareMathOperator{\cfiminv}{\mathcal{F}_{\mathrm{cl}}^{-1}}
\DeclareMathOperator{\trcfiminv}{\tr\mathcal{F}_{\mathrm{cl}}^{-1}}
\begin{document}

\title{Generalizable control for multiparameter quantum metrology}
\author{Han Xu}
\affiliation{Department of Physics, City University of Hong Kong, Tat Chee Avenue, Kowloon, Hong Kong SAR, China and Shenzhen Research Institute, City University of Hong Kong, Shenzhen, Guangdong 518057, China}
\affiliation{School of Physics and Technology, Wuhan University, Wuhan 430072, China}
\author{Lingna Wang}
\affiliation{Department of Mechanical and Automation Engineering, Chinese University of Hong Kong, Shatin, Hong Kong SAR, China}
\author{Haidong Yuan}
\email{hdyuan@mae.cuhk.edu.hk}
\affiliation{Department of Mechanical and Automation Engineering, Chinese University of Hong Kong, Shatin, Hong Kong SAR, China}
\author{Xin Wang}
\email{x.wang@cityu.edu.hk}
\affiliation{Department of Physics, City University of Hong Kong, Tat Chee Avenue, Kowloon, Hong Kong SAR, China and Shenzhen Research Institute, City University of Hong Kong, Shenzhen, Guangdong 518057, China}

\begin{abstract}
    Quantum control can be employed in quantum metrology to improve the precision limit for the estimation of unknown parameters. The optimal control, however, typically depends on the actual values of the parameters \black{and thus} needs to be designed adaptively with the updated estimations of those parameters. Traditional methods, such as gradient ascent pulse engineering (GRAPE), need to be rerun for each new set of parameters encountered, making the optimization costly\black{,} especially when many parameters are involved. Here we study the generalizability of optimal control, namely\black{,} optimal controls that can be systematically updated across a range of parameters with minimal cost. In cases where control channels can completely reverse the shift in the Hamiltonian due to \black{a} change in parameters, we provide an analytical method which efficiently generates optimal controls for any parameter starting from an initial optimal control found \black{by either} GRAPE or reinforcement learning. When the control channels are restricted, the analytical scheme is invalid\black{,} but reinforcement learning still retains a level of generalizability, albeit in a narrower range. \black{In} cases where the shift in the Hamiltonian is impossible to \black{decompose} to available control channels, no generalizability is found for either the reinforcement learning or the analytical scheme. We argue that the generalization of reinforcement learning is through a mechanism similar to the analytical scheme. Our results provide insights \black{into} when and how the optimal control in \black{multiparameter} quantum metrology can be generalized, thereby facilitating efficient implementation of optimal quantum estimation of multiple parameters, particularly for an ensemble of systems with ranges of parameters.
\end{abstract}


\maketitle

\section{Introduction}
One of the main goals of quantum metrology is to identify the highest possible precision for the estimation of unknown parameters \black{and} find practical ways to approach that limit.
There have been extensive studies on the optimization of probe states and measurements to achieve better precisions \cite{Humphreys2013Quantum,pezze2017optimal,yue2014quantum,yuan2017quantum}. 
Recently, it \black{was} realized that quantum control during the evolution can be optimized to improve the precision in the estimation of a single parameter \cite{yuan2015Optimal,liu2017quantum,pang2017optimal,fraisse2017enhancing,fiderer2019maximal,haine2020machine} or multiple parameters \cite{yuan2016sequential,liu2017control}. Numerical methods such as \black{gradient} ascent pulse engineering (GRAPE) have been employed to obtain optimal controls  \cite{Khaneja2005Optimal,liu2017quantum,liu2017control}. The optimal controls, however, typically depend on the values of the parameters, which need to be adaptively updated with the accumulation of measurement data. In this process, one typically starts with a nominal (or assumed) value of the unknown parameter, \black{finds} the control optimal for that nominal value, and performs a measurement using that optimized control. This measurement gives an estimated value of the unknown parameter, along with an accuracy which should fall in \black{a} certain theoretical limit dependent on the control process. \black{Then, if} one is not satisfied with this theoretical limit, the process is repeated  with a new control optimized at the last estimated value until \black{a} certain convergence \black{criterion} is met. For an ensemble of systems with parameters to be estimated, varying in a range, such optimization has to be repeated many times, making it \black{resource consuming} and cumbersome \black{[see Fig.~\ref{fig:fig1}(a)]}. A method that can systematically update optimal controls across a range of parameters with minimal cost, a characteristic termed \black{\emph{generalizability}}, is then desirable, particularly for the estimation of multiple parameters as the search space becomes much larger, and performing a brand-new search of optimal controls for each set of  parameters could become prohibitively expensive.

Reinforcement \black{learning} (RL) is an alternative method that can be used to estimate multiple parameters. The main idea of RL is to convert an optimization problem into a \black{game}, \black{with} the goals and constraints of the problem \black{encapsulated} as rules in the game. A computer agent, through playing the game in an emulator, obtains rewards depending on its performance. A procedure that maximizes rewards therefore leads to an optimized solution of the desired problem. \black{Reinforcement learning} has demonstrated its superior power in playing several types of games \cite{mnih2013playing,mnih2015human,silver2016mastering}. It has also been shown to improve solutions to many problems in quantum information science \cite{Thomas2018Reinforcement}, quantum control \cite{Bukov2018Reinforcement,niu2019universal,zhang2019does,lin2020quantum}, and in \black{particular quantum} metrology \cite{xu2019generalizable,schuff2020improving,fiderer2020neural}. In a previous work on the single-parameter quantum estimation, we have shown that RL has an important advantage: the generalizability \cite{xu2019generalizable}. An agent trained for one value of the parameter can be straightforwardly generalized to other values of the parameter, producing near-optimal controls \black{[see Fig.~\ref{fig:fig1}(b)]}. This makes RL efficient for quantum estimation with an ensemble of systems where the values of the parameter may vary within a certain range.

\begin{figure}
    \includegraphics[width=0.8\linewidth]{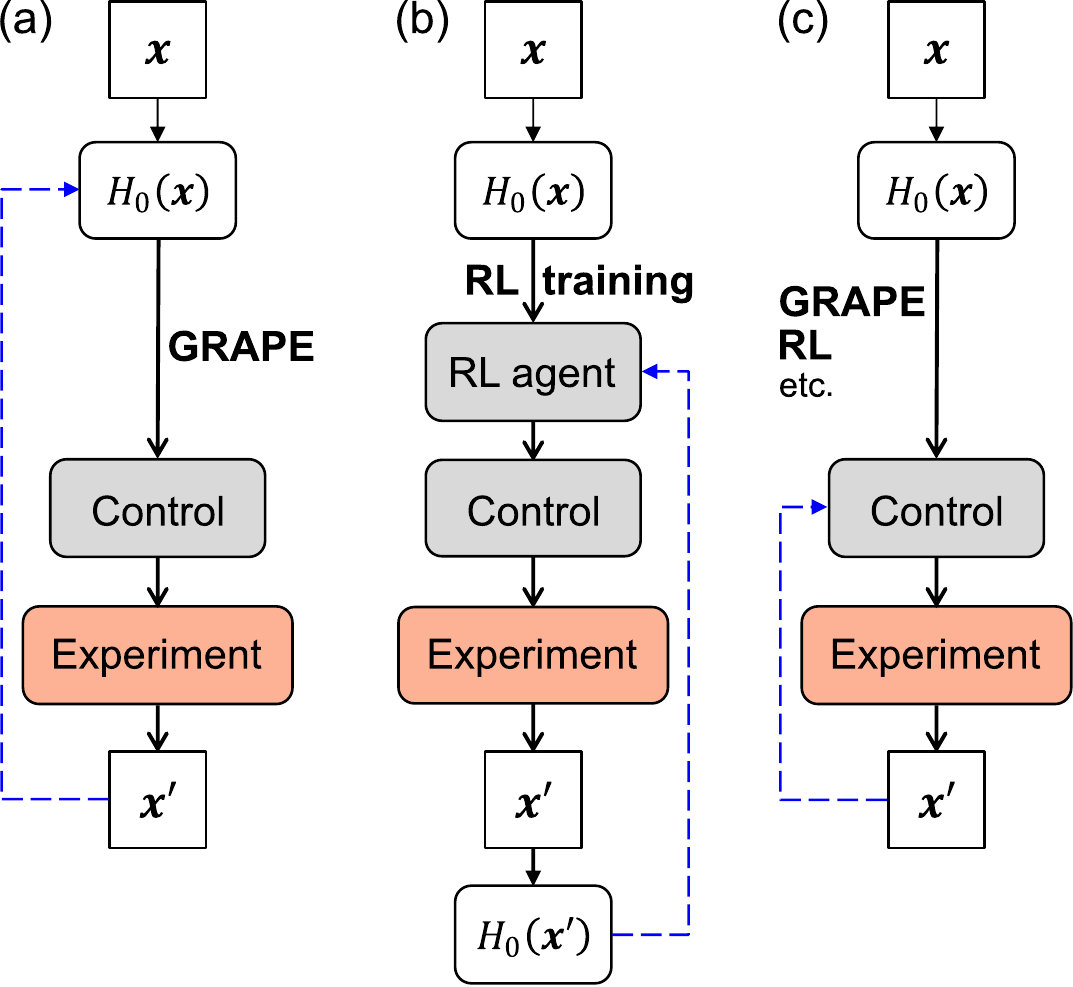}
    \caption{Illustrative flowcharts for estimating $\bm{x}$, a vector encapsulating all parameters to be estimated, using controls optimized by (a) GRAPE, (b) RL, and (c) the analytical scheme. In all panels, the blue dashed line indicates that subsequent control should be optimized according to updated estimations of \black{the} parameters of the previous step. In (a), GRAPE has to be rerun whenever an updated estimation of the parameters is encountered. In (b), the training of the RL agent is only conducted \black{once and} controls for updated parameters can be generated directly from the agent without additional training. In (c), while the initial optimal control has to be found by either GRAPE or RL, subsequent updates can be efficiently done according to Theorem \ref{alg:thm1}.}
    \label{fig:fig1}
\end{figure}

The generalizability afforded by RL \cite{sutton2018reinforcement} has been actively studied in the literature \cite{Tobin2017domain,lee2019network}. While a complete understanding of its origin is still lacking, the present consensus is that the generalizability likely arises from the fact that an RL agent \black{underfits} the training data \cite{sutton1996generalization,Cobbe2018Quantifying,sutton2018reinforcement}. Recently, several schemes have been designed to quantify the generalization in RL \cite{Cobbe2018Quantifying,zhang2018dissection,packer2019assessing}, and it has been found that techniques such as \black{domain} randomization \cite{Tobin2017domain}, \black{network} randomization \cite{lee2019network} and the prediction model \cite{Pathak2017} could improve the generalizability. In quantum parameter estimation, the updated estimations of parameters correspond to the updated environments in RL. \black{Here the} generalization of RL is essentially a problem of training an RL agent that works for an ensemble of systems with ranges of unknown parameters, as we have done in this work.

The main goal of this paper is to elucidate how the generalizability is maintained in the situation of \black{multiparameter} quantum estimation. To this purpose, we introduce an analytical method which adjusts the controls (already optimal for certain parameters) to reverse the shift of the Hamiltonian stemming from the change of the parameters so as to maintain the optimality \black{[see Fig.~\ref{fig:fig1}(c)]}. This analytical method only applies to the situation in which the controls are \black{``complete,''} \black{i.e.,}~there are sufficient control channels to absorb the shift in the Hamiltonian. In this situation,  generalization using the analytical method is most effective as only the initial control for a chosen set of parameters needs to be found by RL or GRAPE. When the shift in the Hamiltonian due to \black{a} change in parameters can only be partially absorbed into available control channels, the analytical method is no longer valid. However, RL still provides generalizable optimal control, albeit for a narrower range of parameters, and we believe that the mechanism for the generalizability afforded by RL is similar to the analytical scheme. \black{Finally}, if the shift in the Hamiltonian cannot be decomposed into available control channels, such as the coupling between two spins considered in this paper, neither RL nor the analytical method provides any generalizability. In this case, one has to rerun either RL or GRAPE for each new set of parameters. Our results provide insights \black{into} when and how the optimal control in \black{multiparameter} quantum metrology can be generalized, thereby facilitating efficient implementation of optimal quantum estimation of multiple parameters, particularly for an ensemble of systems with ranges of parameters.

The remainder of the paper is organized as follows. In Sec.~\ref{sec:model} we briefly review the theory on \black{multiparameter} estimation and present the two examples considered in this paper.  \black{Section}~\ref{sec:algo} explains the methods used in this work, including GRAPE, RL, and the analytical scheme. In Sec.~\ref{sec:gene} we show detailed results on the generalizability applied to the two examples. We \black{summarize} in Sec.~\ref{sec:concl}. Necessary details on the implementation of RL, the comparison between RL and GRAPE, and additional \black{discussion} about the generalizability are provided in \black{the Appendixes}.

\section{multiparameter quantum estimation}\label{sec:model}
We consider the time evolution of a quantum state described by the master \black{equation} \cite{breuer2002theory}
\begin{equation}\label{eq:master}
    \partial_t\rho(t)=\mathcal{L}[\rho(t)],
\end{equation}
where $\mathcal{L}[\rho(t)]=-i[H,\rho(t)]+\black{\Gamma[\rho(t)]}$. $\black{\Gamma[\rho(t)]}$ describes the effect of \black{noises and} $H$ is the Hamiltonian that can be decomposed as
\begin{equation}
    H=H_0(\bm{x})+\sum_{i=1}^P u_i(t)H_i,
\end{equation}
where $H_0(\bm{x})$ is the free evolution with $\bm{x}$, a vector encapsulating the unknown parameters of interest. The \black{term} $u_i(t)H_i$ \black{represents} the control \cite{Khaneja2005Optimal} \black{and} $P$ is the number of available control channels. We discretize the total evolution time $T$ into $N=T/\Delta t$ equal time slices, and a \black{piecewise} constant pulse sequence generates an evolution as
\begin{equation}
    \rho(T)=\prod_{j=0}^{N-1} e^{\Delta t\mathcal{L}_j}\rho(0),
    \label{eq:hcstep}
\end{equation}
where $\mathcal{L}_j$ is the superoperator for the $j$th time slice. The control field is described by the sequence of $u_i(j\Delta t)$ on the interval $[0,T]$.

Given a \black{positive-operator-valued} measurement (POVM) $\qty{\Pi_y}$,
the covariance matrix of the measurements is \black{lower bounded} by \cite{liu2019quantum}
\begin{equation}
    \cov(\widehat{\bm{x}},\Pi_y) \geqslant \frac{1}{n}\mathcal{F}_{\mathrm{cl}}^{-1}(\Pi_y) \geqslant \frac{1}{n}\mathcal{F}^{-1},
\end{equation}
where $\widehat{\bm{x}}$ is the unbiased estimator of $\bm{x}$ and $n$ is the number \black{of repetitions}. The first inequality is the Cram\'er-Rao bound, where $\cfim(\Pi_y)$ is the classical Fisher information matrix which depends on the choice of the measurement,
\begin{equation}
    \mathcal{F}_{\mathrm{cl},\alpha\beta} = \sum_y \frac{\partial_{\bm{x}_{\alpha}}p_{y|\bm{x}}\partial_{\bm{x}_{\beta}}p_{y|\bm{x}}}{p_{y|\bm{x}}},
\end{equation}
where $p_{y|\bm{x}}=\tr[\rho(t)\Pi_y]$. The second inequality is the quantum Cram{\'e}r-Rao bound with \black{$\mathcal{F}$ the} quantum Fisher information matrix, which typically \black{cannot} be attained for the estimation of multiple parameters \cite{liu2019quantum}. In this \black{work we} focus on the classical Cram\'er-Rao (CR) \black{bound and} take the trace of the covariance matrix as the figure of \black{merit}
\begin{equation}\label{eq:CR}
    |\delta\widehat{\bm{x}}|^2 = \tr\cov(\widehat{\bm{x}}) \geqslant \trcfiminv,
\end{equation}
where $\Pi_y$ is omitted from the covariance matrix since a fixed POVM is used for each \black{example and} $\trcfiminv$ is the key quantity governing the precision of measurements that will be frequently referred to simply as \black{the CR bound} in the remainder of the paper.

In the following, we discuss the quantum parameter estimation under local controls on each qubit in the multiple-qubit system. The control Hamiltonian for $m$ qubits is
\begin{equation}\label{eq:localc}
    H_c(t) = \sum_{i=1}^{m} \bm{u}^{(i)}(t)\cdot\bm{\sigma}^{(i)},
\end{equation}
where ${\bm\sigma}^{(i)}=(\sigma_{x}^{(i)},\sigma_{y}^{(i)},\sigma_{z}^{(i)})$ and $\sigma_{a}^{(i)} \equiv \mathbb{I}^{(1)}\otimes \cdots \black{\otimes}\mathbb{I}^{(i-1)}\otimes \sigma_{a}\otimes \mathbb{I}^{(i+1)}\otimes \cdots \otimes\mathbb{I}^{(m)}$. $\mathbb{I}$ is the $2\times2$ identity matrix and $a=x,y,z$. The total number of control channels is $P=3m$.
To compare with the optimal controls obtained with GRAPE, we will use the same examples studied in the previous work \cite{liu2017control}.  


\emph{Example 1.} The first example involves a two-qubit system where one qubit is in a magnetic field with both the magnitude and direction unknown. The other qubit acts as an ancillary qubit. The Hamiltonian of the first qubit is
\begin{equation}\label{eq:h0eg1}
    H_0=\bm B \cdot \bm\sigma^{(1)},
\end{equation}
where $\bm B = (B\sin\vartheta\cos\varphi, B\sin\vartheta\sin\varphi, B\cos\vartheta)$. The parameters to be estimated are $\bm{x}=(B,\vartheta,\varphi)$ characterizing the strength and direction of the magnetic field. The dephasing on the first qubit can be described by the Lindblad form,
\begin{equation}\label{eq:noiseeg1}
    \black{\Gamma[\rho(t)]} = \frac{\gamma}{2}[\sigma_z^{(1)}\rho(t)\sigma_z^{(1)} - \rho(t)],
\end{equation}
where $\gamma$ is the dephasing rate\black{,} which is set as $\gamma=0.2$ in our simulation. Different values of the dephasing rates are considered in Appendix~\ref{sec:appx-rate} for comparison. We use the maximally entangled \black{state $(\ket{00}+\ket{11})/\sqrt{2}$ as} the probe state, and choose the projective measurement on the Bell bases $(\ket{00}\pm\ket{11})/\sqrt{2}$ and $(\ket{01}\pm\ket{10})/\sqrt{2}$.

\emph{Example 2.} The second example is a \black{$ZZ$}-coupled two-qubit system. The Hamiltonian of the free evolution is
\begin{equation}\label{eq:h0eg2}
    H_0=\omega_1\sigma^{(1)}_z+\omega_2\sigma^{(2)}_z+g\sigma^{(1)}_z\sigma^{(2)}_z,
\end{equation}
where $\bm{x}=(\omega_1,\omega_2,g)$ are the parameters to be estimated. \black{Here} $\omega_1$ and $\omega_2$ are individual magnetic fields along the \black{$z$ direction}, while $g$ is the \black{$ZZ$-coupling} strength. Both qubits suffer from dephasing \black{dynamics}
\begin{equation}\label{eq:noise2}
    \Gamma[\rho(t)]=\sum_{i=1,2}\frac{\gamma_i}{2}[\sigma^{(i)}_z\rho(t)\sigma^{(i)}_z -\rho(t)],
\end{equation}
where we set the dephasing rates $\gamma_1=\gamma_2=0.1$ in this work. Other values of the dephasing rates are studied in Appendix~\ref{sec:appx-rate}.
Under unitary dynamics (\black{i.e.,}~without noise), the optimal probe state for this system is $\ket{++}=(\ket{00}+\ket{01}+\ket{10}+\ket{11})/2$. We thus use $\ket{++}$ as the probe and  the local measurements $\dyad{++}{++}, \dyad{+-}{+-}, \dyad{-+}{-+}$\black{,} and $\dyad{--}{--}$\black{,} where $\ket{\pm}=(\ket{0}\pm\ket{1})/\sqrt{2}$.

To facilitate the discussion, all parameters in this paper are considered dimensionless unless \black{indicated otherwise}, \black{e.g.,} $B$, $\omega_1$, $\omega_2$\black{, and $g$. The} control fields are measured in the energy unit of \black{1 and} the time $t$ is evaluated in the time unit of 1.

\vspace{16pt}

\section{Methods}\label{sec:algo}
\subsection{Gradient ascent pulse engineering}\label{subsec:grape}
For the case with only two parameters,  Eq.~\eqref{eq:CR} reduces to $\abs{\delta\widehat{\bm{x}}}^2 \geqslant \det\cfim/\tr\cfim$. It is possible to derive the gradient of the CR bound with respect to the control field and perform GRAPE to find appropriate optimal controls \cite{liu2017control}. However, for the cases with more than two parameters, \black{like the two examples above}, taking the gradient of $\trcfiminv$ is not straightforward. \black{Reference}~\cite{liu2017control} \black{introduced} another objective \black{function $f_0(T)$ as}
\begin{equation}
    f_0(T)^{-1}=\sum_{\alpha}\frac{1}{\mathcal{F}_{\mathrm{cl},\alpha\alpha}(T)},\label{eq:ofunc}
\end{equation}
which provides a lower bound to $\trcfiminv$ since $\qty[\cfiminv]_{\alpha\alpha} \geqslant 1/\mathcal{F}_{\mathrm{cl},\alpha\alpha}$ and $\trcfiminv \geqslant \sum_{\alpha}1/\mathcal{F}_{\mathrm{cl},\alpha\alpha}$. Practically, the optimal controls are found by optimizing $f_0(T)$, even though it is not necessarily the optimization of the desired CR \black{bound $\trcfiminv$}.
While we largely follows Ref. \cite{liu2017control} to repeat the results obtained with GRAPE, we additionally apply the steepest gradient ascent method \cite{ruder2016overview}  and the Adam method \cite{kingma2014adam} to update the control field, which are numerically more efficient than the simple updating scheme employed in Ref. \cite{liu2017control}.

The evolution of the density matrix must be evaluated $N^2$ times to compute the gradient with respect to control, while there are $N$ piecewise controls that need to be updated \cite{xu2019generalizable}. Though the time complexity of GRAPE can be reduced from \black{$O{(N^3)}$ to $O{(N^2)}$} by spending more computer memory on storing the superoperators, the GRAPE method, as can be seen in Sec.~\ref{sec:gene}, quickly becomes \black{prohibitively} expensive as the size of the problem increases.

\subsection{Reinforcement learning}

\begin{figure}
    \includegraphics[width=0.8\linewidth]{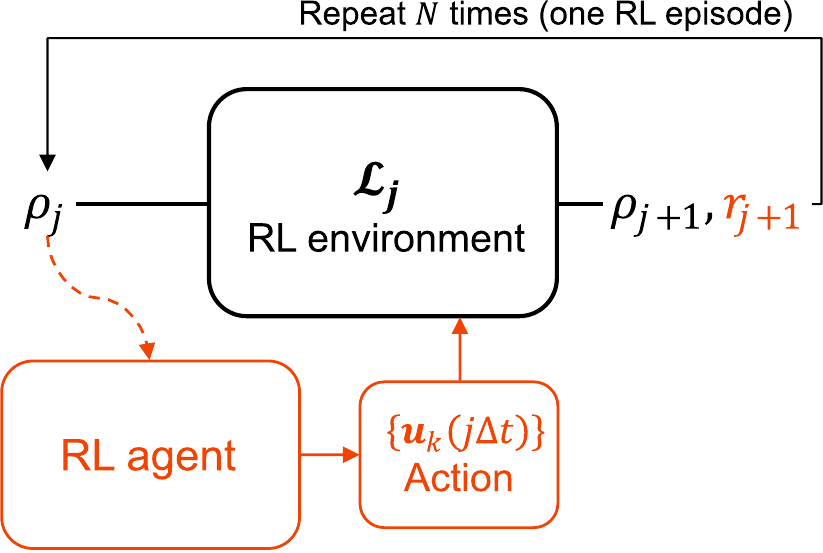}
    \caption{Schematics of the reinforcement learning framework. At the $j$th time step, the RL agent observes the current state as the density matrix $\rho_j\equiv\rho(j\Delta t)$ and takes an action as the control on the system. As a consequence, the state evolves to a new one at the $(j+1)$th time step while the agent receives an immediate reward. After completing one episode, the state is reset to the probe state.}
    \label{fig:fig2}
\end{figure}

For the \black{piecewise} control in Eq.~\eqref{eq:hcstep}, we can interpret the optimization problem as a Markov decision process in RL. \black{Figure~\ref{fig:fig2}} shows the schematics of RL.
The RL procedure is essentially an interaction between an agent and its environment, the latter of which involves a state space $\mathcal{S}$, an action space $\mathcal{A}$, and a numerical reward $\mathcal{R}$ \cite{sutton2018reinforcement}.
The behavior of the agent is defined by a \textit{policy}. In this \black{work we} use the deterministic policy, \black{namely,} it always maps $s_j\in\mathcal{S}$ to a specific action $a_j\in\mathcal{A}$ \cite{silver2014deterministic}.
At each time step $j$, the agent observes the current state $s_j$ and chooses an action $a_j\in \mathcal{A}$ according to its policy. The action $a_j$ constitutes the control $\bm u(j\Delta t)$ which steers the quantum state according to the dynamics of the environment described by the \black{time evolution~\eqref{eq:master}}, while the agent receives an immediate reward $r_{j+1}$. The agent then observes \black{the} state $s_{j+1}$ and conducts action $a_{j+1}$ prescribed by the environment. This procedure is iterated until the target time $T$ (or total number of \black{steps} $N$) is reached, concluding one \textit{episode} of RL.  Each episode is summarized by a \textit{trajectory} of states, actions, and rewards as $\qty(s_0,a_0,r_1,s_1,\dots,a_{N-1},r_{N},s_{N})$. A replay memory is used to store $(s_j,a_j,r_{j+1},s_{j+1})$ (\black{refered to as a sample}) at each time step in the iteration. The RL agent is then trained using samples from the replay memory, which stabilizes the learning process \cite{lin1993reinforcement,mnih2013playing}.

Since the action $a_j$ influences not only the immediate reward $r_{j+1}$ but also the future rewards $r_{k>j+1}$, the immediate reward alone is not an appropriate objective for training the RL agent. To take into account the future rewards, we define the total discounted reward $R_j=\sum_{k=j}^N\alpha^{k-j}_{\rm rl} r_{k+1}$ with $\alpha_{\rm rl}\in[0,1)$ the discount rate \cite{sutton2018reinforcement}. In practice, noises must be added to the actions chosen so as to prevent the simulation being trapped in a local minimum \cite{silver2014deterministic,lillicrap2015continuous}. Therefore, $R_j$ can have different values across different episodes, depending on the policy and the dynamics of the environment. We therefore use the expectation value $R_j$ across all episodes associated \black{with} the given deterministic policy and dynamics, $\mathbb{E}\qty[R_j]$, as our objective. Practically, the RL agent maximizes the expected value of the total discounted reward at the initial state, $\mathbb{E}\qty[R_0]$ \cite{sutton2018reinforcement}.

In this \black{work we} assume that the density \black{matrix $\rho(j\Delta t)$ of} any quantum state concerned is fully known to the agent \cite{Thomas2018Reinforcement,xu2019generalizable,schuff2020improving}, which may not be true in experiments \cite{leonhardt1997measuring,James2001measurement}.
Our numerical simulation can be viewed as an emulator with \black{regard} to the RL procedure.
In our numerical simulation, we must guess a value of the parameter to build the emulator for the RL agent. 
During one episode (either training or testing) of RL, the master equation is evaluated $N$ times in the RL environment. The time complexity for each episode is therefore \black{$O{(N)}$}. For a given problem, the training may actually be slower than GRAPE as a large prefactor is added depending on the details of the algorithm. Nevertheless, the generalizability afforded by RL can be advantageous in certain situations.

We employ the \black{deep deterministic policy gradient (DDPG) reinforcement learning} algorithm \cite{lillicrap2015continuous}, which uses neural networks as function approximators of the agent. The agent's policy is chosen to be deterministic for the following reasons. \black{First}, it can treat  continuous actions as required in our problems. \black{Second}, computing the deterministic policy requires \black{fewer} samples than otherwise \cite{silver2014deterministic}, which saves computational resources. \black{Third}, the main subject of our study is the generalizability, and having deterministic policies greatly facilitates evaluating the  generalization of RL. The implementation of the DDPG procedure is detailed in Appendix~\ref{sec:appx-ddpg}.

\subsection{Analytical scheme}

We now introduce the method of adjusting the optimal controls obtained for the parameters $\bm{x}$ to the new \black{one $\bm{x}'=\bm{x}+\delta \bm{x}$ so} that it can offer a high precision of estimation at the new parameters. We emphasize that this method can only be used on a known initial control optimized at a specific parameter, with \black{complete} control channels available to reverse changes in the Hamiltonian.
The Hamiltonian of the new value $\bm{x}'$ can be rewritten as
\begin{equation}
    H(\bm{x}') = H_0(\bm{x}) + \delta H{(\bm{x},\bm{x'})} +\sum_{i=1}^P u_i(t)H_i,
\end{equation}
where $\delta H{(\bm{x},\bm{x'})} = H_0(\bm{x}') - H_0(\bm{x})$ is the change of the Hamiltonian of free evolution. \black{Here} $H_c(t)=\sum_{i=1}^P u_i(t)H_i$ is the control Hamiltonian on \black{a} total of $P$ channels.
We are interested in the piecewise square pulses of Eq.~\eqref{eq:hcstep}
on the time interval $[0,T]$. The analytical method that generalizes optimal control to other parameters relies essentially on the following \black{theorem.}

\begin{thm}\label{alg:thm1}
  For any $\bm{x}'$, if $\delta H{(\bm{x},\bm{x'})}$ can be decomposed into the linear combination $\delta H{(\bm{x},\bm{x'})} = \sum_i \delta u_{\bm{x}\bm{x'},i} H_i$ \black{and} the optimal control $\qty{u_i}$ is known for the Hamiltonian $H_0(\bm{x})$, the optimal control for the Hamiltonian $H_0(\bm{x}')$ is
    \begin{equation}\label{eq:opt_control}
        u_i \to u_i - \delta u_{\bm{x}\bm{x'},i}.
    \end{equation}
\end{thm}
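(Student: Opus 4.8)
\emph{Proof proposal.} The plan is to show that the replacement \eqref{eq:opt_control} makes the \emph{entire} noisy evolution of the probe under $H_0(\bm{x}')$ — and hence every measurement probability and the whole Fisher information matrix — coincide with what the \emph{original} control $\qty{u_i}$ produces under $H_0(\bm{x})$, and then to transfer optimality by a bijection on the space of controls.

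First I would substitute and collect terms. Applying $u_i\to u_i-\delta u_{\bm{x}\bm{x}',i}$ to $H_0(\bm{x}')$ gives the working Hamiltonian $H_0(\bm{x}')+\sum_i\qty(u_i(t)-\delta u_{\bm{x}\bm{x}',i})H_i = H_0(\bm{x}')-\delta H(\bm{x},\bm{x}')+\sum_i u_i(t)H_i = H_0(\bm{x})+\sum_i u_i(t)H_i$, where the first equality is the decomposition hypothesis $\delta H(\bm{x},\bm{x}')=\sum_i\delta u_{\bm{x}\bm{x}',i}H_i$ and the second is the definition $\delta H(\bm{x},\bm{x}')=H_0(\bm{x}')-H_0(\bm{x})$. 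Because the noise term $\Gamma$ carries no $\bm{x}$ dependence in the models considered, the slice Lindbladian $\mathcal{L}_j$ in \eqref{eq:hcstep}, hence the output state $\rho(T)$ and all probabilities $p_{y|\bm{x}'}=\tr[\rho(T)\Pi_y]$, are identical to those of the reference problem with $H_0(\bm{x})$ and $\qty{u_i}$.

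Second I would evaluate the classical Fisher information at $\bm{x}'$. Since the probe, the POVM, the noise, and the (now fixed, $\bm{x}'$-independent) control amplitudes do not depend on the true value being estimated, the only $\bm{x}'$ dependence of each slice generator is through $H_0(\bm{x}')$, so its derivative is $-i[\partial_{x'_\alpha}H_0(\bm{x}'),\cdot]$, whereas in the reference problem the analogous object is $-i[\partial_{x_\alpha}H_0(\bm{x}),\cdot]$; these agree whenever $H_0$ is affine in the estimated parameters, in which case $\cfim(\bm{x}')$ with the shifted control equals $\cfim(\bm{x})$ with $\qty{u_i}$, and $\trcfiminv$ (and the surrogate $f_0(T)$) take the same value. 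For optimality, I would repeat the first two steps with an arbitrary admissible control $\bm w$ in place of $\qty{u_i-\delta u_{\bm{x}\bm{x}',i}}$, obtaining that the CR bound at $\bm{x}'$ under $\bm w$ equals the CR bound at $\bm{x}$ under $\bm w+\delta\bm u_{\bm{x}\bm{x}'}$; the map $\bm w\mapsto\bm w+\delta\bm u_{\bm{x}\bm{x}'}$ is a bijection of the admissible control set onto itself (this is exactly what the ``complete control channel'' hypothesis guarantees), so minimizing over $\bm w$ for $H_0(\bm{x}')$ is equivalent to minimizing over all controls for $H_0(\bm{x})$, whose minimizer is $\qty{u_i}$ by assumption, i.e.\ $\bm w=\qty{u_i-\delta u_{\bm{x}\bm{x}',i}}$.

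I expect the delicate point to be the Fisher-information step: one must differentiate with respect to the true parameter while holding the control amplitudes fixed as numbers (also differentiating $\delta u_{\bm{x}\bm{x}',i}$ would spuriously annihilate the information), and one needs $\partial H_0$ to be genuinely unchanged by the constant shift, which is automatic for affine $H_0$ but in a curvilinear parametrization (e.g.\ the angles of Example 1) picks up the parameter-space Jacobian relating $\bm{x}$ and $\bm{x}'$ — so the theorem is cleanest in the affine case, and I would state it there. A secondary point is that the bijection in the last step must respect any amplitude bound on the controls, which is again why the statement is restricted to complete (and effectively large enough) control channels; once the channels cannot absorb $\delta H$, the working Hamiltonian no longer reduces to $H_0(\bm{x})+\sum_i u_i H_i$ and the argument — consistently with the loss of generalizability reported for the restricted and coupled cases — fails.
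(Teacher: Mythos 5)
Your first step (absorbing $\delta H(\bm{x},\bm{x}')$ into the controls so that the slice Lindbladians, the trajectory $\rho(j\Delta t)$, and the outcome probabilities $p_{y|\bm{x}'}$ all coincide with the reference problem at $\bm{x}$) is exactly the paper's. From there the arguments diverge, and the divergence matters for scope. The paper does not restrict to affine $H_0$: it introduces a transformation matrix $R_{\beta\alpha}$, built from Jacobians of a change of variables in parameter space, with $\partial_{x'_\alpha}\mathcal{L}_j(\bm{x}')=\partial_{x_\beta}\mathcal{L}_j(\bm{x})R_{\beta\alpha}$, which gives $\cfim(\bm{x}')=R^{T}\cfim(\bm{x})R$ and $\trcfiminv(\bm{x}')=\tr[R^{-1}\cfiminv(\bm{x})(R^{-1})^{T}]$. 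This is essential because the theorem is applied to the spherical parametrization $(B,\vartheta,\varphi)$ of Example 1, where $R\neq\mathbb{I}$ and the optimal CR bound genuinely changes with $\bm{x}'$ [cf.\ Eq.~\eqref{eq:gene_B}]; the claim there is that the shifted control remains a minimizer, not that the bound is preserved. Your proof, by your own admission, covers only the case $R=\mathbb{I}$, so as written it does not justify the theorem in the generality the paper uses it. On the other hand, your optimality step --- the shift $\bm{w}\mapsto\bm{w}+\delta\bm{u}_{\bm{x}\bm{x}'}$ is a control-independent bijection of the admissible control set, so minimization at $\bm{x}'$ pulls back to minimization at $\bm{x}$ --- is cleaner and, in the affine case, airtight. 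The paper instead transfers optimality through the congruence by invoking positive definiteness of $\cfiminv$, which tacitly assumes that the control minimizing $\tr M$ over achievable $M=\cfiminv(\bm{x}|\bm{v})$ also minimizes the weighted trace $\tr[(RR^{T})^{-1}M]$; that implication is not automatic unless $R$ is proportional to an orthogonal matrix or the optimal $\cfiminv$ is minimal in the Loewner order. To complete your argument at the paper's level of generality, combine the two ingredients: derive the $R$-congruence as the paper does, observe that $R$ depends only on $(\bm{x},\bm{x}')$ and not on the control, and then run your bijection argument on the transformed objective $\tr[R^{-1}\cfiminv(\bm{x}|\bm{v})(R^{-1})^{T}]$ --- noting that this establishes optimality with respect to the $R$-weighted trace, with the plain-trace claim requiring the additional Loewner-minimality caveat just mentioned.
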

\begin{proof}
    When the control field $\qty{u_i}$ changes to $u_i-\delta u_{\bm{x}\bm{x'},i}$ on each time slice,
    \begin{eqnarray}
            H(\bm{x}')&=&H_0(\bm{x}) + \delta H{(\bm{x},\bm{x'})} +\sum_{i=1}^P \qty(u_i-\delta u_{\bm{x}\bm{x'},i})H_i \nonumber\\
            &=& H_0(\bm{x})  +\sum_{i=1}^P u_iH_i \equiv H(\bm{x}),\label{eq:hprime}
    \end{eqnarray}
    and the superoperators describing the dynamics with respect to $\bm{x}'$ are $\mathcal{L}_j(\bm{x}')=\mathcal{L}_j(\bm{x})$. Consequently, the states at each time step are $\rho_{\bm{x}'}(j\Delta t)=\rho_{\bm{x}}(j\Delta t)$ according to the state evolution defined by Eq.~\eqref{eq:hcstep}. We can write down the derivative with respect to the parameters $x'_{\alpha}$ at the final state in the \black{first-order approximation}
    \begin{equation}\label{eq:drho}
        \pdv{\rho_{\bm{x}'}(T)}{x'_\alpha} = \sum_{j=0}^{N-1} D_{j}^{N-1} \pdv{\Delta t \mathcal{L}_j(\bm{x}')}{x'_\alpha}  \rho_{\bm{x}}(j\Delta t),
    \end{equation}
    where $D_{j}^{N-1}=e^{\Delta t  \mathcal{L}_{N-1}}\cdots e^{\Delta t  \mathcal{L}_{j+1}}e^{\Delta t  \mathcal{L}_j}$.
    The partial derivative in \black{Eq.~\eqref{eq:drho}} is simply the derivative over the same superoperator but with respect to the new parameters $\bm{x}'$. First, we define a transformation, $\partial_{x'_\alpha}\mathcal{L}_j(\bm{x}')=\partial_{x_\beta}\mathcal{L}_j(\bm{x}){R}_{\beta\alpha}$ with summation over repeated indices, that maps the two partial derivatives to each other as follows.


    When the partial derivative has an implicit form, we perform a change of variables
      $\partial_{x_\alpha}\mathcal{L}_j(\bm{x}) = \partial_{z_\beta}\mathcal{L}_j(\bm{x}){C}_{\beta\alpha}(\bm{x})$,
    where the Jacobian matrix ${C}_{\beta\alpha}(\bm{x}) = \partial z_\beta(\bm{x})/ \partial x_\alpha$ and its inverse ${C}_{\alpha\beta}^{-1}(\bm{x}) = \partial x_\alpha(\bm{x})/ \partial z_\beta$. Although the optimal control and $\delta u_{\bm{x}\bm{x'},i}$ naturally depend on the parameters $\bm{x}$ and $\bm{x'}$, we stress that the control field $u_i$ in Eq.~\eqref{eq:hprime} is independent of $\bm{x}$ and $\bm{x'}$, because $u_i$ has been optimized for the particular $\bm{x}$ and is kept constant, so $\partial_{z_\beta}\mathcal{L}_j(\bm{x})=-i\partial_{z_\beta}H_0^{\cross}(\bm{x})$ with $H_0^{\cross}(\bm{x})\ \bullet \equiv\qty[H_0(\bm{x}),\bullet]$. Such new variables should be used such that $\partial_{z_\beta}H_0^{\cross}$ has an explicit form and satisfies $\partial_{z_\beta}H_0^{\cross}(\bm{x}')=\partial_{z_\beta}H_0^{\cross}(\bm{x})$.
    We can then rewrite $\partial_{x'_\alpha}\mathcal{L}_j(\bm{x}')$ and obtain the transformation matrix ${R}_{\beta\alpha}$,
    \begin{equation}\label{eq:transR}
      \begin{aligned}
        \partial_{x'_\alpha}\mathcal{L}_j(\bm{x}') &= \partial_{x_\beta}\mathcal{L}_j(\bm{x}){C}^{-1}_{\beta\zeta}(\bm{x}){C}_{\zeta\alpha}(\bm{x}') \\
        &\equiv \partial_{x_\beta}\mathcal{L}_j(\bm{x}){R}_{\beta\alpha}.
      \end{aligned}
    \end{equation}
    Substituting \black{Eq.~\eqref{eq:transR}} in Eq.~\eqref{eq:drho}, we have
    \begin{equation}
        \pdv{\rho_{\bm{x}'}(T)}{x'_\alpha} = \sum_{\beta}\sum_{j=0}^{N-1} D_{j}^{N-1} \pdv{\Delta t \mathcal{L}_j(\bm{x})}{x_\beta}  \rho_{\bm{x}}(j\Delta t){R}_{\beta\alpha}.
    \end{equation}
    Therefore, the change in Eq.~\eqref{eq:opt_control} leads to $\partial_{x'_\alpha}\rho_{\bm{x}'}(T) = {R}_{\beta\alpha} \partial_{x_\beta}\rho_{\bm{x}}(T)$.
    
    \black{Next we} calculate the classical CR bound for the parameters $\bm{x}$ and $\bm{x}'$ using the POVM $\qty{\Pi_y}$.  Using the relation $\rho_{\bm{x}'}(j\Delta t)=\rho_{\bm{x}}(j\Delta t)$ and the definition of the measurement $p_{y\vert\bm{x}}=\tr[\rho_{\bm{x}}(T)\Pi_y]$, we obtain the measurement $p_{y\vert\bm{x}'}=p_{y\vert\bm{x}}$. Since $\partial_{x'_\alpha}\rho_{\bm{x}'} = {R}_{\beta\alpha}\partial_{x_\beta}\rho_{\bm{x}}$,
    $\partial_{x_\alpha}p_{y\vert\bm{x}}=\tr[\partial_{x_\alpha}\rho(T)\Pi_y]$ yields $\partial_{x'_\alpha}p_{y\vert\bm{x}'} = {R}_{\beta\alpha}\partial_{x_\beta}p_{y\vert\bm{x}}$. Thus, the classical Fisher information matrices satisfy $\cfim(\bm{x}') = {R}^{T}\cfim(\bm{x}){R}$ or $\cfiminv (\bm{x}')={R}^{-1}\cfiminv (\bm{x})\qty({R}^{-1})^{T}$.
    Taking the trace of both sides leads to
    \begin{equation}\label{eq:lemma}
      \trcfiminv(\bm{x}') = \tr[{R}^{-1}\cfiminv (\bm{x})\qty({R}^{-1})^{T}].
    \end{equation}
    Regarding the parameters $\bm{x}$, the control $\qty{u_i}_{\rm opt}$ is optimal in the sense that $\trcfiminv(\bm{x}\vert \qty{u_i}_{\rm opt}) \leqslant \trcfiminv(\bm{x}\vert \qty{u_i})$. Since $\cfiminv$ is positive definite,  Eq.~\eqref{eq:lemma} guarantees $\trcfiminv(\bm{x'}\vert \qty{u_i}_{\rm opt}) \leqslant \trcfiminv(\bm{x'}\vert \qty{u_i})$, \black{i.e.,} the adjusted control of Eq.~\eqref{eq:opt_control} is optimal for the parameters $\bm{x}'$.

    Note that the theorem relies on the fact that the available control channels must be capable \black{of absorbing} the changes in the Hamiltonian, requiring \black{complete} channels of control. When the control is limited, \black{e.g.,} only along one particular direction, the theorem is no longer applicable.

\end{proof}

\section{Generalizability}\label{sec:gene}

In this \black{section we compare} results from different methods for ensembles of systems where the parameters take a range of values. In this case, the controls and estimated values of parameters must be updated adaptively \cite{liu2017quantum}.
For the analytical method, we firstly optimize the controls for a chosen set of parameters using RL and GRAPE, and then analytically shift the controls with the new set of \black{parameters} according to Theorem \ref{alg:thm1}.
Since the analytical scheme directly modifies the control profile, it provides the most efficient method to generalize an optimal control to other parameters. For RL, an emulator for the new estimated value has to be created. The agent runs in the emulator and designs the quantum control by taking the states of each time step, which has a small computational cost \black{$O{(N)}$}.
We \black{will} see that the generalizability of RL is retained for most parameters to be measured. We also consider the situation where the control channels are incomplete so the analytical scheme becomes inapplicable, while the generalizability of the RL agent \black{is retained} in a smaller range. On the other hand, for the change of the \black{$ZZ$-coupling} strength in Example 2, both the analytical scheme and the generalizability of RL fail, so the controls must be optimized by RL or GRAPE individually for each new value.

\begin{figure}[tb]
    \includegraphics[width=0.8\linewidth]{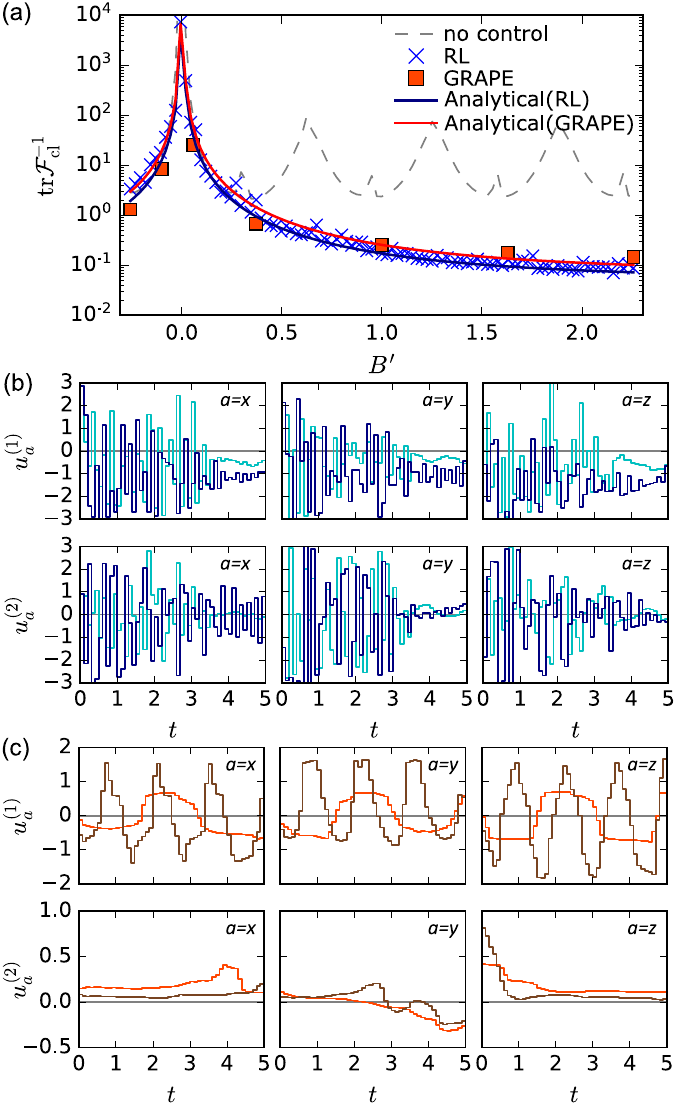}
    \caption{(a) Generalizability for the strength of the magnetic field in Example 1. For RL, we show $\trcfiminv$ at the target time $T=5$ as a function of $B'$ in $[B-2\pi/T,B+2\pi/T]$ where $B=1$. We use the neural networks that are trained at $(B,\vartheta,\varphi)=(1,\pi/4,\pi/4)$ to generate the pulse sequences for $B'$. The blue crosses \black{show} the results of RL. The red squares represent the results using new pulse sequences optimized by GRAPE. The solid dark blue (red) curve show the analytical results using the pulse sequences found by RL (GRAPE) at  $(B,\vartheta,\varphi)=(1,\pi/4,\pi/4)$. The dashed gray curve shows the CR bound without control. (b) Representative pulse profiles obtained with RL. The cyan lines show the results of $B'=B$. The purple lines show the results of $B'=B+2\pi/T$. (c) Representative pulse profiles obtained with GRAPE. The red lines show the results of $B'=B$. The brown lines show the results of $B'=B+2\pi/T$.}
    \label{fig:fig3}
\end{figure}

\textit{Example 1.}
For the analytical method, \black{due} to Theorem~\ref{alg:thm1}, once we have the optimal control $\bm{u}^{(1)}$ for $\bm{B}$, we can update the control field as $\bm{u}^{(1)} \to \bm{u}^{(1)}-\qty(\bm{B'}-\bm{B})$ when the estimation of the parameter is updated to $\bm{B'}$.
At the same time, the RL policy is learned specifically at $\bm{B}$\black{; then} we generalize the trained policy network to the different magnetic field $\bm{B'}$.
In the simulation, the control fields and the RL agents are optimized at $(B,\vartheta,\varphi)=(1,\pi/4,\pi/4)$.

One of the simplest cases is that $\bm{B}$ and $\bm{B'}$ are along the same direction. In \black{Theorem~\ref{alg:thm1} the} matrix ${R}$ consists of the ratios ${R}_{BB}=1$, \black{${R}_{\vartheta\vartheta}=B'/B$, and} ${R}_{\varphi\varphi}=B'/B$. All other elements are zero. The CR bound at the new value of the parameter is given by
\begin{equation}\label{eq:gene_B}
    \trcfiminv(\bm{B}')=\delta \widehat{B}^2+\qty(\frac{B}{B'})^2\qty(\delta \widehat{\vartheta}^2+\delta \widehat{\varphi}^2),
\end{equation}
where $\delta \widehat{B}^2$, \black{$\delta \widehat{\vartheta}^2$, and} $\delta \widehat{\varphi}^2$ represent the diagonal terms of $\cfiminv(\bm{B})$. \black{Here} $\trcfiminv$ diverges at $B'=0$, but with the increase of $B'$ the value of $\trcfiminv$ decreases and it is lower bounded by $\delta \widehat{B}^2$.

\black{Figure}~\ref{fig:fig3}(a) shows the CR bound at the target time $T$ as a function of $B'$, where $B'$ is allowed to vary \black{within} $[B-2\pi/T,B+2\pi/T]$, while the direction of the magnetic field is fixed at $(\vartheta,\varphi)=(\pi/4,\pi/4)$. In \black{Fig.~\ref{fig:fig3}(a) the} dark blue and red solid lines correspond to the analytical scheme using the optimized controls obtained with RL and GRAPE\black{, respectively, at $B=1$} (for \black{the} choice of the target time $T$ along with other details in the implementation of RL, see Appendix~\ref{sec:irl}).
\black{The results of} evaluating the generalization of RL at different $B'$ are shown by blue crosses. In addition, the CR bound obtained by rerunning the GRAPE procedure at selected $B'$ is presented \black{as} red squares.
\black{The results of} the CR bound from the generalization of RL \black{are} substantially lower than \black{for} the case without control in a large neighborhood around $B'=1$, confirming the generalizability. In fact, for $B'>1$, RL outperforms GRAPE, even if it is not trained \black{or} optimized specifically at those points. Moreover, the analytical scheme and the generalization of RL return similar values of the CR bounds, indicating that controls produced by a trained RL agent away from the training point is near optimal. For $|B'|\lesssim0.3$, no method offers notable improvement as the optimal solution diverges according to Eq.~\eqref{eq:gene_B}.
We note that each of the GRAPE \black{points} costs on average \black{more than or approximately $2$ h}, meaning that rerunning the optimization in a range of $B'$ would be prohibitively expensive. This demonstrates the effectiveness of RL and the analytical approaches.

\black{Figure}~\ref{fig:fig3}(b) shows representative pulse profiles found by RL, comparing the results found at the training point $B'=B=1$ (cyan lines) and at a point away from the training point $B'=B+2\pi/T$ (purple lines). Similarity between the two sets of pulses $\bm{u}^{(1)}$ is evident from the figure, suggesting that  RL achieves generalizability through a mechanism similar to Theorem~\ref{alg:thm1}, \black{i.e.,} the reversal of changes of the free evolution by adjusting available controls.

Fixing the strength $B$, we consider the change in the direction $(\vartheta,\varphi)$. \black{Here we} apply the change of variables transforming the spherical coordinates to Cartesian \black{notation} \cite{yuan2016sequential}. The Jacobian matrix and ${R}$ are much more complicated, and the classical Fisher information matrix is not necessarily diagonal for $\bm{B'}$.
Nevertheless, for $B'=B=1$ and varying $(\vartheta',\varphi')$, the CR bound of $\bm{B'}$ is given by
\begin{equation}
    \trcfiminv(\bm{B}') = C_1\delta \widehat{B}^2 + C_2\delta \widehat{\vartheta}^2 + C_3\delta \widehat{\varphi}^2,
 \end{equation}
  with
\begin{align*} \begin{cases}
     C_1=\cos^2\vartheta+\sin^2\vartheta\qty(\cos^2\Delta\varphi+\csc^2\vartheta'\sin^2\Delta\varphi), \\
     C_2=\sin^2\vartheta+\cos^2\vartheta\qty(\cos^2\Delta\varphi+\csc^2\vartheta'\sin^2\Delta\varphi ), \\
     C_3=\sin^2\vartheta\csc^2\vartheta' \cos^2\Delta\varphi+\sin^2\vartheta\sin^2\Delta\varphi,
    \end{cases}  \end{align*}
where $\Delta\varphi=\varphi'-\varphi$ is the change in $\varphi$. This expression shows that $\Delta\varphi$ hardly shifts the optimal value of the CR bound, but $\vartheta'$ has notable influences on the results. For example, if $\varphi'=\varphi$ and $\vartheta'$ varies, the CR bound would be
\begin{equation}\label{eq:gene_theta}
    \trcfiminv(\bm{B}') = \delta \widehat{B}^2 + \delta \widehat{\vartheta}^2 + \black{\sin^2{\vartheta}\csc^2{\vartheta'}}\,\delta \widehat{\varphi}^2,
\end{equation}
and $\cfiminv(\bm{B}')$ is \black{block diagonal}. The change of \black{the} parameter $\vartheta'$ can have \black{a} strong influence on the estimation of $\varphi'$. Specifically, the new lower bound $\delta\widehat{\varphi}'^2$ has the smallest value at $\vartheta'=\pi/2$ and diverges at $\vartheta'=0,\pi$, \black{i.e.,} the poles $\ket{0}$ \black{and} $\ket{1}$ on the Bloch sphere.

In Fig.~\ref{fig:fig4} we show the results from the analytical scheme and the generalization of RL concerning the direction of the magnetic field, \black{i.e.,}~$\vartheta'$ and $\varphi'$ are varied within $[0,\pi]$ and $[0,2\pi]$, respectively. The strength of the magnetic field is fixed at $B=1$ and RL is trained at $(\vartheta,\varphi)=(\pi/4,\pi/4)$.
\black{Figure}~\ref{fig:fig4}(a) shows the \black{pseudo-color-plot} of the CR bound. In a reasonably large neighborhood around the original training point
$(\vartheta,\varphi)=(\pi/4,\pi/4)$, the RL-designed pulse sequences produce \black{values of $\trcfiminv$ similar} to \black{those of} the analytical method, indicating generalizability.
\black{Figure}~\ref{fig:fig4}(b) and (c) each \black{present} a cut at fixed $\vartheta'$ or $\varphi'$ as indicated by the dashed lines in the right panel of Fig.~\ref{fig:fig4}(a). \black{Figure}~\ref{fig:fig4}(b) shows results with $\vartheta'=\pi/4$ and $\varphi'$ \black{varying in} $[0,2\pi]$. For the analytical method, it is clear that results initiated by RL have lower CR bounds as compared to the case with no control and GRAPE. This is also true in Fig.~\ref{fig:fig4}(c), which shows results with $\varphi'=\pi/4$ and varying $\vartheta'$ in $[0,\pi]$. Overall, quantum controls suppress the divergences that appear periodically in the no-control case at $\varphi'=0,\pi$ or $\vartheta'=\pi/2$, although the value of \black{the} CR bounds \black{increases} to infinity as $\vartheta'$ \black{approaches} $0$ or $\pi$, as expected from Eq.~\eqref{eq:gene_theta}. Scattered points showing results from GRAPE are found to be comparable to its corresponding analytical results, but we emphasize that here each point is a complete \black{rerun} of the full algorithm which is very expensive. On the other hand, RL provides \black{suboptimal} results with a much lower cost on the computational resources.

\begin{figure}
  \includegraphics[width=0.8\linewidth]{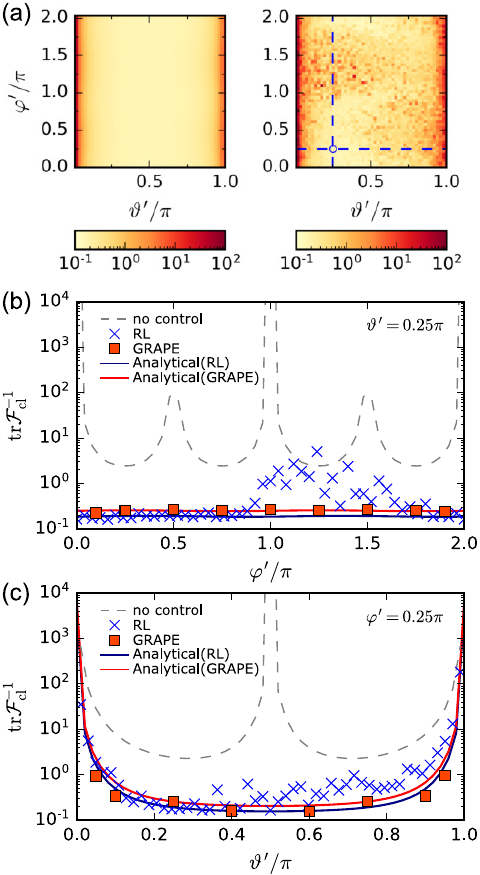}
  \caption{Generalizability for the direction of the magnetic field in Example 1. The strength of the magnetic field is fixed as $B=1$.  (a) \black{Pseudo-color-plots} showing the CR \black{bound $\trcfiminv$ at} the target time $T=5$ as \black{a function} of the parameters $(\vartheta',\varphi')$ in the plane $[0,\pi]\times [0,2\pi]$. Note that the color bar has a \black{logarithmic} scale. \black{Shown on the left is} the CR bound using the analytical scheme. \black{Shown on the right is} the CR bound under controls generated by RL using the policy network trained at $(B,\vartheta,\varphi)=(1,\pi/4,\pi/4)$, the location of which is denoted by a blue circle. Two  cuts (\black{shown} as blue dashed lines) correspond to \black{the} results \black{in} (b) and (c). (b) The CR bound as \black{a function} of $\varphi'$ at fixed $\vartheta'=0.25\pi$. (c) The CR bound as \black{a function} of $\vartheta'$ at fixed $\varphi'=0.25\pi$.}
 \label{fig:fig4}
\end{figure}

\begin{figure}
    \includegraphics[width=0.8\linewidth]{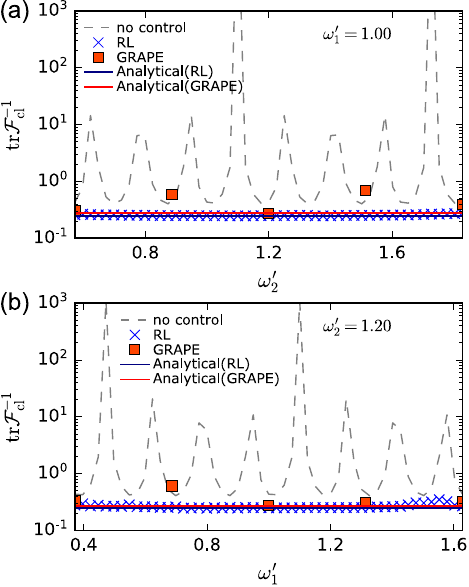}
    \caption{Generalizability for the individual magnetic fields in Example 2. The \black{$ZZ$-coupling} strength is fixed at $g=0.1$.  (a) The CR bound as \black{a function} of $\omega_2'$ at fixed $\omega_1'=1.0$. (b) The CR bound as \black{a function} of $\omega_1'$ at fixed $\omega_2'=1.2$.}
   \label{fig:fig5}
\end{figure}

\textit{Example 2.} The simplest case for coupling between qubits is the \black{$ZZ$ coupling} as in Eq.~\eqref{eq:h0eg2}. Again, we introduce a change to the \black{parameters $\bm{x}\to\bm{x}'$}, so $\delta H{(\bm{x},\bm{x'})}=H_0(\bm{x}')-H_0(\bm{x})= \sum_{k=1}^2\qty(\omega_k'-\omega_k)\sigma_z^{(k)}+\qty(g'-g)\sigma_z^{(1)}\sigma_z^{(2)}$. 
When the individual magnetic fields $\omega_{k=1,2}$ change, it is straightforward to decompose the change $\delta H{(\omega_k,\omega_k')}$ into local controls $\sigma_z^{(k)}$. In the analytical scheme, control fields are adapted as $u_z^{(k)} \to u_z^{(k)}-\qty(\omega_k'-\omega_k)$. Note that the partial derivative of $\omega_k\sigma_z^{(k)}$ with respect to $\omega_k$ is not a function of the parameter itself, \black{i.e.,}~${R}=\mathbb{I}_{3}$ is a $3\times3$ identity matrix. Thus, Theorem~\ref{alg:thm1} proves that the optimal CR bound always stays at the same \black{value $\trcfiminv(\bm{x}')=\trcfiminv(\bm{x})$}. For the generalization of RL, the RL agent is initially trained at the parameters $(\omega_1,\omega_2,g)=(1.0,1.2,0.1)$. Similar to Example~1, the trained RL agent will then be straightforwardly used on a range of $\omega_1'$ and $\omega_2'$.

In Fig.~\ref{fig:fig5}, we fix the \black{$ZZ$-coupling} strength and show the values of the CR bound with varying  individual magnetic fields. To better illustrate the results, we fix one of $\omega_1'$ and $\omega_2'$ and vary the other. In Fig.~\ref{fig:fig5}(a), $\omega_1'$ are fixed at 1.0 while $\omega_2'$ varies, and in Fig.~\ref{fig:fig5}(b) we fix $\omega_2'$ at 1.2 and change $\omega_1'$. In both cases, the CR bounds obtained by the analytical method stay at constant values as noted above. Within the analytical scheme, initial controls obtained by RL give a slightly lower value of the CR bound. Meanwhile, RL shows a high level of generalizability since the CR bound is orders of magnitudes lower than that of no-control cases and is almost the same as those from the analytical method for a wide range of values. We have used GRAPE to optimize new control fields for selective $\omega_k'$ and the results are \black{represented} by red squares. The fluctuation in the optimized values may arise from the fact that the optimization procedure of GRAPE might fall in a local \black{minimum and} that $f_0(T)$ \black{[Eq.~\eqref{eq:ofunc}]} is different from the desired CR bound as noted in Sec.~\ref{subsec:grape}.

\begin{figure}[tb]
    \includegraphics[width=0.8\linewidth]{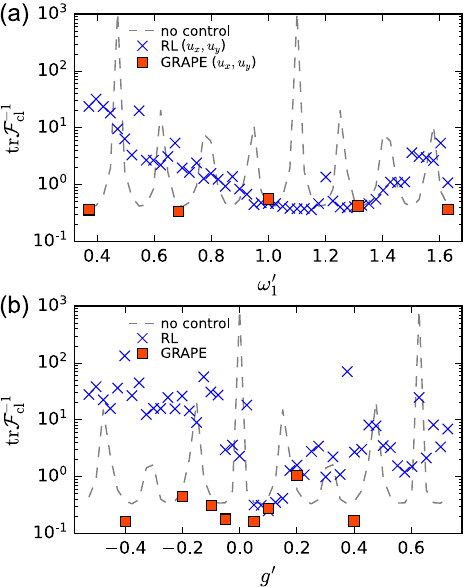}
    \caption{(a) Generalizability for the magnetic field using only the local controls $\sigma_x^{(1)}$, $\sigma_y^{(1)}$, $\sigma_x^{(2)}$ and $\sigma_y^{(2)}$. We show $\trcfiminv$ at the target time $T=5$ as a function of $\omega_1'$ in $[1.0-\pi/T,1.0+\pi/T]$. \black{The other} parameters are fixed at $\omega_2'=1.2$ and $g=0.1$. (b) Lack of generalizability for the \black{$ZZ$-coupling} strength. For RL, we show $\trcfiminv$ at the target time $T=5$ as a function of $g'$ in $[0.1-\pi/T,0.1+\pi/T]$. \black{The magnetic} fields $\omega_1'=1.0$ and $\omega_2'=1.2$.}
    \label{fig:fig6}
\end{figure}

Under Example 2, we discuss two representative situations where the analytical method becomes inapplicable, where one must either apply GRAPE \black{for} each set of parameters or utilize the generalizability of RL.

In the first situation, we restrict the control channels to $\sigma_x^{(1)}$, $\sigma_x^{(2)}$, $\sigma_y^{(1)}$ and $\sigma_y^{(2)}$, so the shift $\delta H{(\omega_k,\omega_k')}\sim\sigma_z^{(k)}$ \black{resulting from} the change of $\omega_k$ can no longer be absorbed \black{by} local controls. In order to test the generalizability, the RL agent is trained at $(\omega_1,\omega_2,g)=(1.0,1.2,0.1)$ with the local controls $\sigma_x^{(1)}$, $\sigma_y^{(1)}$, $\sigma_x^{(2)}$\black{, and} $\sigma_y^{(2)}$. We study the magnetic field $\omega_1'$ varying within $[1.0-\pi/T,1.0+\pi/T]$ while $\omega_2'=1.2$. As shown in Fig.~\ref{fig:fig6}(a), the generalizability of RL is weaker, with reasonably low CR bounds seen in a smaller range $0.9\lesssim\omega_1'\lesssim1.3$ as a direct result of limited control channels.

In the second situation, the coupling strength $g$ \black{changes and} the analytical scheme no longer holds because the change $\delta H{(g,g')}\sim \sigma_z^{(1)}\sigma_z^{(2)}$ \black{cannot} be simply decomposed into a combination of local controls, even with all control channels in Eq.~\eqref{eq:localc} available. \black{Figure}~\ref{fig:fig6}(b) shows the CR bound under RL-designed controls as a function of $g'$, which varies \black{within} $[0.1-\pi/T,0.1+\pi/T]$. The RL agent is trained at $(\omega_1,\omega_2,g)=(1.0,1.2,0.1)$. The values of \black{the} CR bound from RL are much higher than the no-control case, negating the generalizability with respect to the \black{$ZZ$-coupling} strength. Smaller values of \black{the} CR bound are attained by GRAPE, which means that the control fields now must \black{be} optimized for every individual $g'$ using \black{either} GRAPE or \black{RL} but with the agent \black{retrained} at each individual new $g'$ value. Practically, we note that quantum parameter estimation is not the only technique to measure the coupling strength. Taking the coupling between nuclear spins in molecules as an example, its magnitude can be determined by common splitting of spins and its sign can be found by spin-selective pluses \cite{Vandersypen2005NMR}. In other words, even if our generalizable quantum parameter estimation fails for the coupling between spins, our results on the strength of magnetic field $\omega_k$ are more relevant to experimental implementation.

\section{Conclusion}\label{sec:concl}

In this \black{work we} discussed the generalizability of the \black{multiparameter} quantum estimation schemes in two representative two-qubit systems in \black{the} presence of dephasing. \black{Generalizability} refers to a characteristic of a method that can derive from a known (initial) optimal control for a given set of parameters to the optimal control for another set of parameters, with minimal cost. The initial optimal control can be found \black{by either} RL or GRAPE. When the controls are \black{complete}, \black{i.e.,}~the shifts in the Hamiltonian due to change in parameters can be completely absorbed in the available control channels, one may generalize the optimal control using \black{an} analytical method. On the other hand, when the controls are limited such that these shifts cannot be completely absorbed in the available control channels, such as the case in which the control of a spin system is restricted to certain directions while the evolution of the spins does not have the restriction, the analytical scheme is invalid. However, RL retains a level of generalizability, albeit for a narrower range. \black{Finally}, if it is impossible to decompose the shift in the Hamiltonian into available controls, such as the \black{$ZZ$ coupling} of two spins in Example 2 considered here, neither RL nor the analytical scheme provides any generalizability and one must spend a \black{great deal} of resources to rerun either RL or GRAPE for each new set of parameters encountered. The analytical scheme provides important insight \black{into} when and how the generalizable optimal control of RL may be implemented, as the way RL generalizes controls is found to be similar to the analytical scheme. Our results should facilitate efficient implementation of optimal quantum estimation of multiple parameters, particularly for an ensemble of systems with ranges of parameters.

\section*{Acknowledgements}  This work \black{was} supported by the Key-Area Research and Development Program of GuangDong Province  (Grant No.~2018B030326001), the National Natural Science Foundation of China (Grant No.~11874312), the Research Grants Council of Hong Kong (\black{Grants No.} CityU 11303617, \black{No.} CityU 11304018, \black{No.} CityU 11304920, \black{and No.} CUHK 14308019), the Research Strategic Funding Scheme of CUHK (Grant No. 3133234), and the Guangdong Innovative and Entrepreneurial Research Team Program (Grant No. 2016ZT06D348).

\appendix

\section{Practical setup of RL}\label{sec:appx-ddpg}
In \black{the} DDPG \black{algorithm}, the RL agent consists of a deep-$Q$ network denoted by $Q^{\mu}(s,a|\theta^{Q})$ and a (deterministic) policy network denoted by $\mu(s|\theta)$\black{, which} are parameterized by $\theta^{Q}$ and $\theta$ respectively. \black{Here the} state-action value function $Q^{\mu}\qty(s_j,a_j)=\mathbb{E}\qty[R_j|s_j,a_j]$ describes the expected total discounted reward of state $s_j$ for taking the action $a_j$ \cite{sutton2018reinforcement}.
The policy network has two hidden layers with 400 and 300 neurons, respectively. The deep-$Q$ network has two hidden layers with 400 and \black{300+$P$} neurons, respectively. \black{Here} $P$ represents the dimension of the action space\black{, i.e.,} the number of control channels. Note that the second hidden layer in deep-$Q$ network concatenates the data from the output of its first hidden layer and the actions from the policy network. For both networks, the input layers take a 32-dimensional real vector corresponding to the fully known $4\times4$ density matrix $\rho(j\Delta t)$. The \black{$\text{elu}$} activation function \cite{PyTorch2019} is used after their hidden layers, and the final output layer of policy network uses \black{$\text{tanh}$} \cite{PyTorch2019} to bound the actions.

The DDPG algorithm adapts two sets of neural networks similar to the \black{deep-$Q$ learning}: exploring the RL environment with \black{one} set of networks and training the other set (\black{namely,} the target networks) using the samples in a replay memory \cite{mnih2013playing,lillicrap2015continuous}. When performing the update to neural networks, we uniformly choose random samples from the replay memory so that the correlation between successive updates is removed, which reduces the variance of learning \cite{sutton2018reinforcement}.
The Adam algorithm is used for the target deep-$Q$ and target policy networks with the learning rate $10^{-4}$ \cite{PyTorch2019}. Other hyperparameters are identical to the original DDPG paper \cite{lillicrap2015continuous}.

In order to prevent the agent from learning undesirable solutions \cite{sutton2018reinforcement}, we define the reward function that has \black{a} \black{nonzero} value only at the target time $T$,
\begin{equation}\label{eq:reward}
    r_j= \begin{cases}
        0 & \black{\text{for\quad}} j\ne N, \\
        100\sum_{n=1}^{4}10^{-10^{n}\trcfiminv} & \black{\text{for\quad}} j=N,
        \end{cases}
\end{equation}
where the exponential encourages the agent to reduce $\trcfiminv$.

\section{Training procedure}\label{sec:appx-hyper}

\begin{figure}[tb]
    \includegraphics[width=0.7\linewidth]{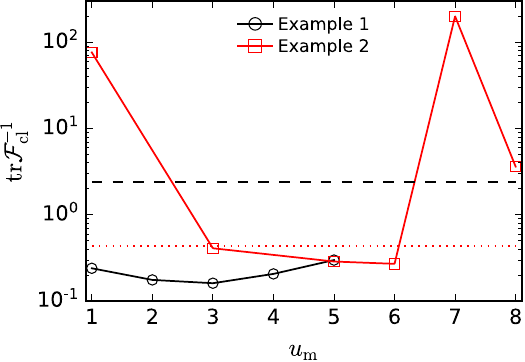}
    \caption{Performance of RL using different ranges of action space. We show the value of $\trcfiminv$ as a function of the range $u_\mathrm{m}$ \black{where} $|{u_a^{(i)}}|\leqslant u_\mathrm{m}$. \black{The other parameters are $T=5$ and $\Delta t=0.1$.} The same random seed is used. The black dashed and red dotted horizontals are the CR bound without control in \black{Examples 1 and 2}, respectively.}
    \label{fig:fig7}
\end{figure}

Training a neural network is usually time consuming, demanding heavy tuning of hyperparameters.  In this \black{section we} investigate the performance of RL under different choices of the hyperparameters and exemplify the RL and GRAPE methods regarding their time complexity. The following simulations use the parameters $(B,\vartheta,\varphi)=(1,\pi/4,\pi/4)$ for Example 1 and $(\omega_1,\omega_2,g)=(1.0,1.2,0.1)$ for Example 2.

\black{First}, the controls in our simulation are \black{piecewise} square pulses where the strengths of controls, $u_a^{(i)}$, are directly relevant to the performance of RL. The \black{range $|{u_a^{(i)}}|\leqslant u_\mathrm{m}$ significantly} affects the efficiency and stability of the RL procedure.
In \black{Fig.~\ref{fig:fig7} the} results show that the range of control has little effect on $\trcfiminv$ at target time $T$ in Example 1, while the range can significantly affect the performance of RL in Example 2. From these results, we set the range as $|{u_a^{(i)}}|\leqslant3$ in Example 1 and $|{u_a^{(i)}}|\leqslant5$ in Example 2.

\black{Second}, we compare the training of RL using 2000, \black{10\,000, and 20\,000} RL episodes, with results shown in Fig.~\ref{fig:fig8}. For both Example 1 and \black{Example 2 we} have taken $T=5$ and $\Delta t=0.1$\black{,} so there are $50$ state-action pairs in each RL episode. The size of \black{the} replay memory is typically chosen as \black{one-tenth} of the total state-action pairs \cite{mnih2013playing}. The results show that in both cases, RL with \black{fewer} than \black{10\,000} RL episodes fails to learn a good policy. In order to make the training satisfactory, we use \black{10\,000} RL episodes with replay memory size \black{of} \black{50\,000} in Example 1 and use \black{20\,000} RL episodes with replay memory size \black{100\,000} in Example 2. When $T$ is larger than $5$, longer RL training episodes and larger replay memory should be adapted accordingly.

\begin{figure}[tb]
  \includegraphics[width=0.8\linewidth]{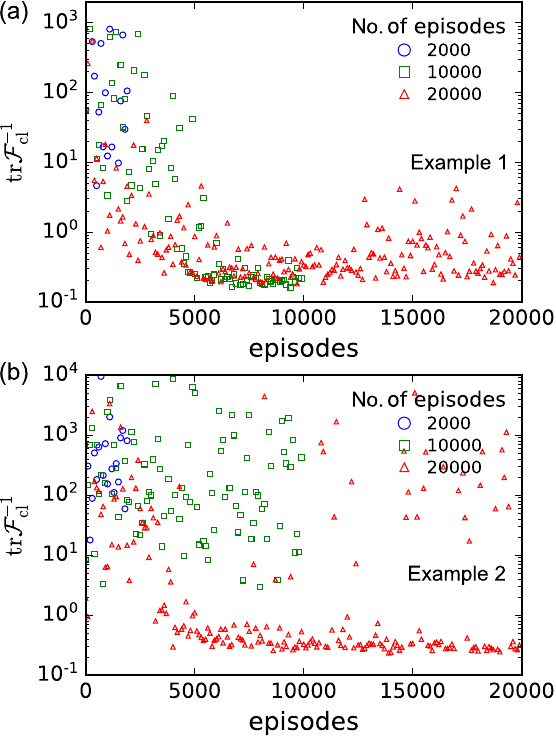}
  \caption{Training processes of RL in (a) Example 1 and (b) Example 2. \black{Here} $T=5$ \black{and} $\Delta t=0.1$. Blue marks represent results from a total of 2000 RL episodes with replay memory size \black{10\,000}, green marks correspond to a training involving \black{10\,000} RL episodes with replay memory size \black{20\,000}, and red marks represent \black{20\,000} RL episodes with replay memory size \black{100\,000}.}
  \label{fig:fig8}
\end{figure}

\begin{figure}[tb]
  \includegraphics[width=0.8\linewidth]{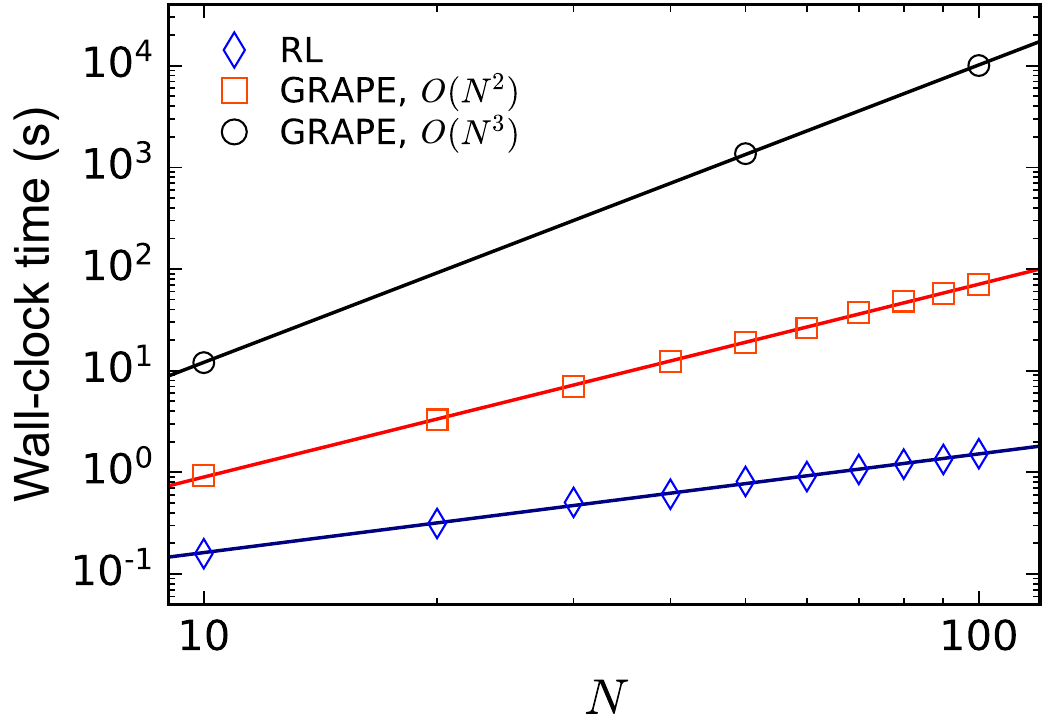}
  \caption{Time complexity of different methods. The \black{$x$ axis} shows the system size, i.e., the number of time slices $N$. The \black{$y$ axis} shows the wall-clock time cost in seconds. Diamonds on the blue line show the time cost in one RL training episode. Squares on the red line and circles on the black line are the \black{times} required by one iteration of GRAPE with and without storing the superoperators, respectively. Notice that the base-10 \black{logarithmic} scale is used for both axes.}\label{fig:fig9}
\end{figure}

\begin{figure}[tb]
    \includegraphics[width=0.8\linewidth]{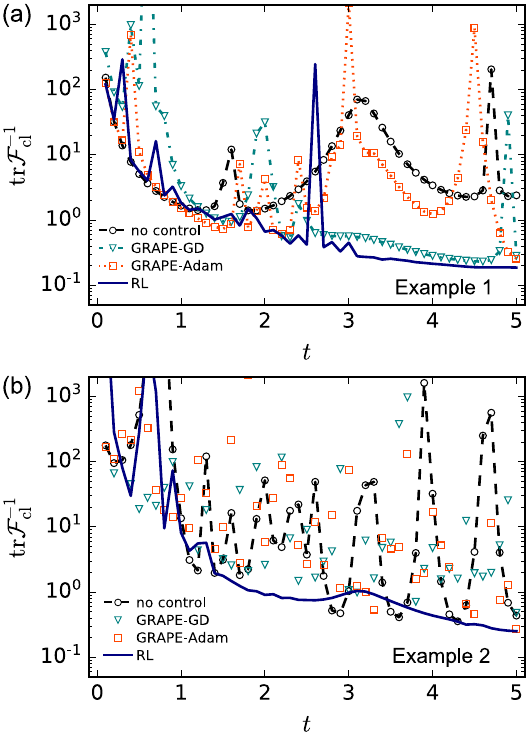}
    \caption{\black{Classical} CR \black{bound $\trcfiminv$} as \black{a function} of $t$ in (a) Example 1 and (b) Example 2. The target time $T=5$. The parameters are $(B,\vartheta,\varphi)=(1,\pi/4,\pi/4)$ in Example 1 and $(\omega_1,\omega_2,g)=(1.0,1.2,0.1)$ in Example 2. The dashed black curves show the results without control. The solid blue curves show the results from RL. The dash-dotted green curves and the dotted orange curves represent results obtained by GRAPE with \black{the} steepest gradient ascent (GRAPE-GD) and Adam (GRAPE-Adam) methods, respectively. For GRAPE, the initial guess of the control is zero in Example 1 and the initial guess is random in Example 2.}\label{fig:fig10}
\end{figure}

\begin{figure}[t]
    \includegraphics[width=0.8\linewidth]{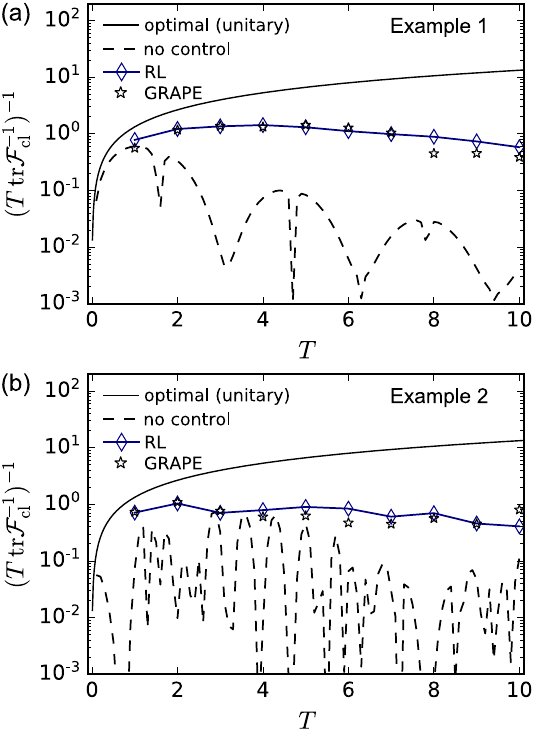}
    \caption{\black{Plot of} $\qty(T\trcfiminv)^{-1}$ as \black{a function} of the target time $T$ in (a) Example 1 \black{with $(B,\vartheta,\varphi)=(1,\pi/4,\pi/4)$} and (b) Example 2 with \black{$(\omega_1,\omega_2,g)=(1.0,1.2,0.1)$}. Diamonds on the blue lines show results obtained by RL. Stars located at each $T$ are  GRAPE results from Ref.~\cite{liu2017control}. The dashed black curves show the value without control. The solid black curves represent the optimal precision limit $4T/3$ under unitary dynamics.}
    \label{fig:fig11}
\end{figure}

\begin{figure}[!]
  \includegraphics[width=0.8\linewidth]{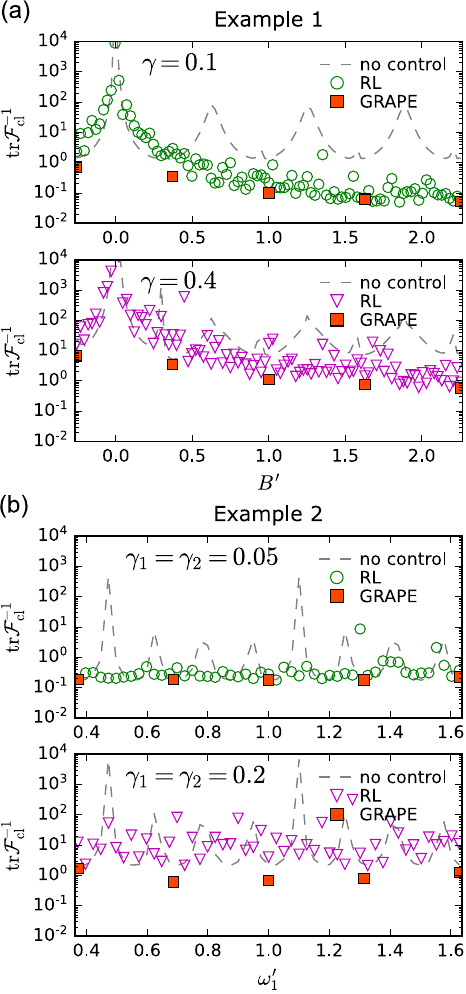}
  \caption{(a) Generalizability regarding the strength of the magnetic field in Example 1. \black{Here} $\trcfiminv$ evaluated at the target time $T=5$ \black{is} shown as \black{a function} of $B'$ in $[B-2\pi/T,B+2\pi/T]$ where $B=1$. The RL agent is trained at $(B,\vartheta,\varphi)=(1,\pi/4,\pi/4)$ and the dephasing rate $\gamma=0.2$, which is subsequently used to generate control pulses for $B'$ under dephasing rates $\gamma=0.1$ (upper panel) and \black{$\gamma=0.4$} (lower panel). (b) Generalizability regarding  individual magnetic fields in Example 2. The CR bounds evaluated at the target time $T=5$ are shown as \black{a function} of $\omega_1'$ with fixed $\omega_2=1.2$ and coupling strength $g=0.1$. The RL agent is trained at $(\omega_1,\omega_2,g)=(1.0,1.2,0.1)$ and dephasing rates $\gamma_{1,2}=0.1$, which \black{are} subsequently used to generate control pulses for a range of $\omega_1'$ under dephasing rates $\gamma_{1,2}=0.05$ (upper panel) and \black{$\gamma_{1,2}=0.2$} (lower panel). The dashed gray curves show the CR bound without control for comparison.}
  \label{fig:fig12}
\end{figure}

\black{Third}, we discuss the time complexity of different methods as a function of $N$, the total number of time slices. As shown in Fig.~\ref{fig:fig9}, the RL method scales as \black{$O{(N)}$}. For the GRAPE methods,  the slower \black{one, $O{(N^3)}$,} has repeatedly calculated the master equation, but the faster \black{one, $O{(N^2)}$,} uses more computer memory to store the superoperators. In general, RL requires about $10^4$ episodes to train an RL \black{agent,} but GRAPE needs about $10^2$ iterations to optimize a control field. For $T=5$ and $\Delta t=0.1$ used in this paper, $N=50$ and the total wall-clock time cost for RL and GRAPE to find the optimal control for a given set of parameters is actually comparable. We note that recent work suggests that the quantum version of \black{the DDPG algorithm} may offer \black{a} further \black{speedup} \cite{wu.2020}.

\section{Implementation of RL}\label{sec:irl}

In this \black{section we} explain \black{the} details of the implementation of RL and compare the optimization from RL and GRAPE \black{for} both examples shown in Sec.~\ref{sec:model}.
While we largely follows Ref.~\cite{liu2017control} for GRAPE, we additionally apply the steepest gradient ascent method \cite{ruder2016overview} (results shown as GRAPE-GD)  and the Adam method \cite{kingma2014adam} (results shown as GRAPE-Adam) to update the control field, which are numerically more efficient than the simple updating scheme.
For RL, the state $s_j$ is a 32-dimensional vector corresponding to real and imaginary parts of elements in the $4\times4$ density matrix $\rho(j\Delta t)$. The action $a_j$ is a \black{six}-dimensional vector \black{consisting} of elements in the control fields $\bm u^{(1)}(j\Delta t)$ and $\bm u^{(2)}(j\Delta t)$ on both qubits. For $j<N-1$, the immediate rewards are $r_{j+1}=0$, while $r_{N}$ is directly related to the CR \black{bound $\trcfiminv$} (\black{a} detailed form is shown in Appendix~\ref{sec:appx-ddpg}). In the RL procedure, the strength of the control field has \black{a} substantial impact on the outcome. A range is therefore imposed on the elements of control fields $u_a^{(i)}(t)\ (i=1,2)$ to avoid instability in the learning process. This range is set empirically as $|u_a^{(i)}|\leqslant3$ for Example 1 and $|u_a^{(i)}|\leqslant5$ for Example 2 as explained in Appendix~\ref{sec:appx-hyper}. No other restrictions on the RL procedure \black{are} imposed.

In Fig.~\ref{fig:fig10}\black{(a) we} show the values of the CR bound evolving from $t=0$ to the target time $t=T$ for Example 1 under quantum control. Overall, the CR bound is highest in the case of no control, indicating a low precision of the estimation. While all three methods produce lower CR bounds at the target time, spikes occur during the evolution using the GRAPE method, especially immediately before the target time. Instead, the precision under RL steadily saturates to the lowest value as the target time is reached. Results from RL is more desirable as in practice the time to implement the measurement may not be accurate \cite{durkin2010preferred,degen2017sensing} \black{and} spikes in the CR bounds could result in a rather low precision when the measurement is not exerted precisely at the targeted time.

The time evolution of the CR bound for Example 2 is shown in Fig.~\ref{fig:fig10}(b). Again, results from RL steadily reach the lowest value, while results from all other methods are unstable over the course of the evolution. An interesting observation is that the result with no control is comparable to other methods at the target time, which may be due to the local measurement preventing all methods from reaching asymptotically optimal solutions \cite{liu2017control,liu2019quantum,Humphreys2013Quantum,szczykulska2016multi}.

The CR bound under controls obtained from GRAPE fluctuates over the course of evolution because the control field is only optimized to minimize the bound at the final time. The RL agent, however, is trained to use samples from the replay \black{memory,} which breaks temporal correlation in each episode. In other words, the RL agent is trained to gradually steer states towards the final state. Therefore, as the system evolves close to the target time, the state is already very close to the final state, which is optimal for the estimation \cite{liu2017quantum, liu2017control, xu2019generalizable}. This leads to less \black{fluctuation} in the \black{results,} as shown in Fig.~\ref{fig:fig10}. Therefore, although both RL and GRAPE are capable \black{of finding} optimal controls at the target time, results from RL have less \black{fluctuation} when the target time is approached, implying \black{a} certain level of robustness against the fluctuations in the time to exert the measurement.

Assuming that the target time is precisely known, we compare different methods on the optimized CR bound as a function of the final time $T$, namely, we allow $T$ to \black{vary and} use the reciprocal of \black{the} normalized CR bound, $\qty(T\trcfiminv)^{-1}$, for comparison.
Under the unitary dynamics, the Heisenberg scaling can be attained for both examples \cite{yuan2016sequential,liu2017control},  corresponding to $\qty(T\trcfiminv)^{-1}=4T/3$. In \black{the} presence of noises, $\qty(T\trcfiminv)^{-1}$ typically increases at the beginning of the evolution and decreases thereafter \cite{liu2017control}, suggesting that the averaged \black{profit} (each time unit) on the precision limit from the time evolution is maximal at a finite \black{$T$,} which is the optimal time to perform the measurement.

For Example 1, Fig.~\ref{fig:fig11}(a) shows the results of $\qty(T\trcfiminv)^{-1}$ \black{vs} $T$. As can be seen from the figure, the maximal value without control is achieved around $T\approx 1.2$, while the largest $\qty(T\trcfiminv)^{-1}$ obtained by RL occurs around $3\lesssim T\lesssim5$, where the effect from control is maximized. As $T$ further increases, the accumulated dephasing effect becomes dominant and the controls may not optimally counteract the noise.
For Example 2, Fig.~\ref{fig:fig11}(b) shows that without control, $\qty(T\trcfiminv)^{-1}$ attains the largest value around $T\approx 3$, but under RL the value of $\qty(T\trcfiminv)^{-1}$ fluctuates very little \black{for} $4\lesssim T\lesssim 6$, and \black{tends} to decrease for $T\gtrsim6$. To ensure consistency, we have fixed $T=5$ for both examples in the main text.
We also present GRAPE results using the same method of Ref.~\cite{liu2017control}. As shown in Fig.~\ref{fig:fig11}, RL and GRAPE methods give comparable results for most values of $T$, with small differences at certain $T$ values.

\section{Generalizability under other values of the dephasing rates}\label{sec:appx-rate}
In this appendix we study the generalizability of RL associated \black{with} varying values of the dephasing rates that are applied in both Example 1 and Example 2.

\black{Figure}~\ref{fig:fig12}(a) shows the CR bounds in Example 1 obtained as \black{a function} of varying magnetic field strength $B'$ under different dephasing rates as indicated. The RL agent is trained at $(B,\vartheta,\varphi)=(1,\pi/4,\pi/4)$ and the dephasing rate $\gamma=0.2$, which is subsequently used to  generate control pulses for $B'$ under dephasing rates $\gamma=0.1$ (upper panel) and \black{$\gamma=0.4$} (lower panel). 
For comparison, the CR bounds obtained by rerunning GRAPE at selected $B'$ values are \black{represented by} squares, while the dashed curves show the CR bounds without control.  As the dephasing rate becomes larger, the CR bound increases, indicating a lower precision due to enlarged noise as expected.
For the smaller dephasing rate $\gamma=0.1$, \black{the} results from RL are comparable to \black{those from} GRAPE for $B'>1$, even if the agent is not trained specifically at those values of $B'$, indicating a good level of generalizability. For the larger dephasing rate $\gamma=0.4$, RL can no longer generate the pulse sequences as \black{optimally} as GRAPE, but the RL results are generally lower than the case without control for a wide range around $B'=1$, suggesting that the generalizability is retained to \black{a} certain extent for larger dephasing rates.
We emphasize that, for $|B'|\sim 0$, no method provides \black{a} notable improvement as the optimal solution diverges according to Eq.~\eqref{eq:gene_B}.

\black{Figure}~\ref{fig:fig12}(b) shows the CR bounds in Example 2 obtained as \black{a} function of varying $\omega_1'$ with $\omega_2'$ fixed at 1.2 under different dephasing rates as indicated.  The RL agent is trained at $(\omega_1,\omega_2,g)=(1.0,1.2,0.1)$ and the dephasing rate $\gamma_{1}=\gamma_{2}=0.1$, which is subsequently used to  generate control pulses under dephasing rates $\gamma_{1}=\gamma_{2}=0.05$ (upper panel) and $\gamma_{1}=\gamma_{2}=0.2$ (lower panel). 
For the smaller dephasing rate $\gamma_{1,2}=0.05$, results from RL are similar to those from GRAPE, demonstrating a high level of generalizability as compared to the no-control case. On the other hand, for the larger dephasing rate $\gamma_{1,2}=0.2$, RL does not offer \black{a} notable improvement from the case without control and is inferior to GRAPE, implying that the generalizability is compromised by large noises. Enhancement of the ability of RL to accommodate large noises would be an important topic for further study, which would likely require more sophisticated techniques such as \black{domain randomization} \cite{Tobin2017domain} and \black{network randomization} \cite{lee2019network}.


\begin{thebibliography}{47}%
    \makeatletter
    \providecommand \@ifxundefined [1]{%
     \@ifx{#1\undefined}
    }%
    \providecommand \@ifnum [1]{%
     \ifnum #1\expandafter \@firstoftwo
     \else \expandafter \@secondoftwo
     \fi
    }%
    \providecommand \@ifx [1]{%
     \ifx #1\expandafter \@firstoftwo
     \else \expandafter \@secondoftwo
     \fi
    }%
    \providecommand \natexlab [1]{#1}%
    \providecommand \enquote  [1]{``#1''}%
    \providecommand \bibnamefont  [1]{#1}%
    \providecommand \bibfnamefont [1]{#1}%
    \providecommand \citenamefont [1]{#1}%
    \providecommand \href@noop [0]{\@secondoftwo}%
    \providecommand \href [0]{\begingroup \@sanitize@url \@href}%
    \providecommand \@href[1]{\@@startlink{#1}\@@href}%
    \providecommand \@@href[1]{\endgroup#1\@@endlink}%
    \providecommand \@sanitize@url [0]{\catcode `\\12\catcode `\$12\catcode
      `\&12\catcode `\#12\catcode `\^12\catcode `\_12\catcode `\%12\relax}%
    \providecommand \@@startlink[1]{}%
    \providecommand \@@endlink[0]{}%
    \providecommand \url  [0]{\begingroup\@sanitize@url \@url }%
    \providecommand \@url [1]{\endgroup\@href {#1}{\urlprefix }}%
    \providecommand \urlprefix  [0]{URL }%
    \providecommand \Eprint [0]{\href }%
    \providecommand \doibase [0]{http://dx.doi.org/}%
    \providecommand \selectlanguage [0]{\@gobble}%
    \providecommand \bibinfo  [0]{\@secondoftwo}%
    \providecommand \bibfield  [0]{\@secondoftwo}%
    \providecommand \translation [1]{[#1]}%
    \providecommand \BibitemOpen [0]{}%
    \providecommand \bibitemStop [0]{}%
    \providecommand \bibitemNoStop [0]{.\EOS\space}%
    \providecommand \EOS [0]{\spacefactor3000\relax}%
    \providecommand \BibitemShut  [1]{\csname bibitem#1\endcsname}%
    \let\auto@bib@innerbib\@empty
    \bibitem [{\citenamefont {Humphreys}\ \emph {et~al.}(2013)\citenamefont
      {Humphreys}, \citenamefont {Barbieri}, \citenamefont {Datta},\ and\
      \citenamefont {Walmsley}}]{Humphreys2013Quantum}%
      \BibitemOpen
      \bibfield  {author} {\bibinfo {author} {\bibfnamefont {P.~C.}\ \bibnamefont
      {Humphreys}}, \bibinfo {author} {\bibfnamefont {M.}~\bibnamefont {Barbieri}},
      \bibinfo {author} {\bibfnamefont {A.}~\bibnamefont {Datta}}, \ and\ \bibinfo
      {author} {\bibfnamefont {I.~A.}\ \bibnamefont {Walmsley}},\ }\href {\doibase
      10.1103/PhysRevLett.111.070403} {\bibfield  {journal} {\bibinfo  {journal}
      {Phys. Rev. Lett.}\ }\textbf {\bibinfo {volume} {111}},\ \bibinfo {pages}
      {070403} (\bibinfo {year} {2013})}\BibitemShut {NoStop}%
    \bibitem [{\citenamefont {Pezz\`e}\ \emph {et~al.}(2017)\citenamefont
      {Pezz\`e}, \citenamefont {Ciampini}, \citenamefont {Spagnolo}, \citenamefont
      {Humphreys}, \citenamefont {Datta}, \citenamefont {Walmsley}, \citenamefont
      {Barbieri}, \citenamefont {Sciarrino},\ and\ \citenamefont
      {Smerzi}}]{pezze2017optimal}%
      \BibitemOpen
      \bibfield  {author} {\bibinfo {author} {\bibfnamefont {L.}~\bibnamefont
      {Pezz\`e}}, \bibinfo {author} {\bibfnamefont {M.~A.}\ \bibnamefont
      {Ciampini}}, \bibinfo {author} {\bibfnamefont {N.}~\bibnamefont {Spagnolo}},
      \bibinfo {author} {\bibfnamefont {P.~C.}\ \bibnamefont {Humphreys}}, \bibinfo
      {author} {\bibfnamefont {A.}~\bibnamefont {Datta}}, \bibinfo {author}
      {\bibfnamefont {I.~A.}\ \bibnamefont {Walmsley}}, \bibinfo {author}
      {\bibfnamefont {M.}~\bibnamefont {Barbieri}}, \bibinfo {author}
      {\bibfnamefont {F.}~\bibnamefont {Sciarrino}}, \ and\ \bibinfo {author}
      {\bibfnamefont {A.}~\bibnamefont {Smerzi}},\ }\href {\doibase
      10.1103/PhysRevLett.119.130504} {\bibfield  {journal} {\bibinfo  {journal}
      {Phys. Rev. Lett.}\ }\textbf {\bibinfo {volume} {119}},\ \bibinfo {pages}
      {130504} (\bibinfo {year} {2017})}\BibitemShut {NoStop}%
    \bibitem [{\citenamefont {Yue}\ \emph {et~al.}(2014)\citenamefont {Yue},
      \citenamefont {Zhang},\ and\ \citenamefont {Fan}}]{yue2014quantum}%
      \BibitemOpen
      \bibfield  {author} {\bibinfo {author} {\bibfnamefont {J.-D.}\ \bibnamefont
      {Yue}}, \bibinfo {author} {\bibfnamefont {Y.-R.}\ \bibnamefont {Zhang}}, \
      and\ \bibinfo {author} {\bibfnamefont {H.}~\bibnamefont {Fan}},\ }\href
      {https://www.nature.com/articles/srep05933} {\bibfield  {journal} {\bibinfo
      {journal} {Sci. Rep.}\ }\textbf {\bibinfo {volume} {4}},\ \bibinfo {pages}
      {5933} (\bibinfo {year} {2014})}\BibitemShut {NoStop}%
    \bibitem [{\citenamefont {Yuan}\ and\ \citenamefont
      {Fung}(2017)}]{yuan2017quantum}%
      \BibitemOpen
      \bibfield  {author} {\bibinfo {author} {\bibfnamefont {H.}~\bibnamefont
      {Yuan}}\ and\ \bibinfo {author} {\bibfnamefont {C.-H.~F.}\ \bibnamefont
      {Fung}},\ }\href {https://www.nature.com/articles/s41534-017-0014-6}
      {\bibfield  {journal} {\bibinfo  {journal} {npj Quantum Inf.}\ }\textbf
      {\bibinfo {volume} {3}},\ \bibinfo {pages} {14} (\bibinfo {year}
      {2017})}\BibitemShut {NoStop}%
    \bibitem [{\citenamefont {Yuan}\ and\ \citenamefont
      {Fung}(2015)}]{yuan2015Optimal}%
      \BibitemOpen
      \bibfield  {author} {\bibinfo {author} {\bibfnamefont {H.}~\bibnamefont
      {Yuan}}\ and\ \bibinfo {author} {\bibfnamefont {C.-H.~F.}\ \bibnamefont
      {Fung}},\ }\href {https://link.aps.org/doi/10.1103/PhysRevLett.115.110401}
      {\bibfield  {journal} {\bibinfo  {journal} {Phys. Rev. Lett.}\ }\textbf
      {\bibinfo {volume} {115}},\ \bibinfo {pages} {110401} (\bibinfo {year}
      {2015})}\BibitemShut {NoStop}%
    \bibitem [{\citenamefont {Liu}\ and\ \citenamefont
      {Yuan}(2017{\natexlab{a}})}]{liu2017quantum}%
      \BibitemOpen
      \bibfield  {author} {\bibinfo {author} {\bibfnamefont {J.}~\bibnamefont
      {Liu}}\ and\ \bibinfo {author} {\bibfnamefont {H.}~\bibnamefont {Yuan}},\
      }\href {\doibase 10.1103/PhysRevA.96.012117} {\bibfield  {journal} {\bibinfo
      {journal} {Phys. Rev. A}\ }\textbf {\bibinfo {volume} {96}},\ \bibinfo
      {pages} {012117} (\bibinfo {year} {2017}{\natexlab{a}})}\BibitemShut
      {NoStop}%
    \bibitem [{\citenamefont {Pang}\ and\ \citenamefont
      {Jordan}(2017)}]{pang2017optimal}%
      \BibitemOpen
      \bibfield  {author} {\bibinfo {author} {\bibfnamefont {S.}~\bibnamefont
      {Pang}}\ and\ \bibinfo {author} {\bibfnamefont {A.}~\bibnamefont {Jordan}},\
      }\href {https://dx.doi.org/10.1038/ncomms14695} {\bibfield  {journal}
      {\bibinfo  {journal} {Nat. Commun.}\ }\textbf {\bibinfo {volume} {8}},\
      \bibinfo {pages} {14695} (\bibinfo {year} {2017})}\BibitemShut {NoStop}%
    \bibitem [{\citenamefont {Fra\"{\i}sse}\ and\ \citenamefont
      {Braun}(2017)}]{fraisse2017enhancing}%
      \BibitemOpen
      \bibfield  {author} {\bibinfo {author} {\bibfnamefont {J.~M.~E.}\
      \bibnamefont {Fra\"{\i}sse}}\ and\ \bibinfo {author} {\bibfnamefont
      {D.}~\bibnamefont {Braun}},\ }\href
      {https://link.aps.org/doi/10.1103/PhysRevA.95.062342} {\bibfield  {journal}
      {\bibinfo  {journal} {Phys. Rev. A}\ }\textbf {\bibinfo {volume} {95}},\
      \bibinfo {pages} {062342} (\bibinfo {year} {2017})}\BibitemShut {NoStop}%
    \bibitem [{\citenamefont {Fiderer}\ \emph {et~al.}(2019)\citenamefont
      {Fiderer}, \citenamefont {Fra\"{\i}sse},\ and\ \citenamefont
      {Braun}}]{fiderer2019maximal}%
      \BibitemOpen
      \bibfield  {author} {\bibinfo {author} {\bibfnamefont {L.~J.}\ \bibnamefont
      {Fiderer}}, \bibinfo {author} {\bibfnamefont {J.~M.~E.}\ \bibnamefont
      {Fra\"{\i}sse}}, \ and\ \bibinfo {author} {\bibfnamefont {D.}~\bibnamefont
      {Braun}},\ }\href {https://link.aps.org/doi/10.1103/PhysRevLett.123.250502}
      {\bibfield  {journal} {\bibinfo  {journal} {Phys. Rev. Lett.}\ }\textbf
      {\bibinfo {volume} {123}},\ \bibinfo {pages} {250502} (\bibinfo {year}
      {2019})}\BibitemShut {NoStop}%
    \bibitem [{\citenamefont {Haine}\ and\ \citenamefont
      {Hope}(2020)}]{haine2020machine}%
      \BibitemOpen
      \bibfield  {author} {\bibinfo {author} {\bibfnamefont {S.~A.}\ \bibnamefont
      {Haine}}\ and\ \bibinfo {author} {\bibfnamefont {J.~J.}\ \bibnamefont
      {Hope}},\ }\href {https://link.aps.org/doi/10.1103/PhysRevLett.124.060402}
      {\bibfield  {journal} {\bibinfo  {journal} {Phys. Rev. Lett.}\ }\textbf
      {\bibinfo {volume} {124}},\ \bibinfo {pages} {060402} (\bibinfo {year}
      {2020})}\BibitemShut {NoStop}%
    \bibitem [{\citenamefont {Yuan}(2016)}]{yuan2016sequential}%
      \BibitemOpen
      \bibfield  {author} {\bibinfo {author} {\bibfnamefont {H.}~\bibnamefont
      {Yuan}},\ }\href {https://link.aps.org/doi/10.1103/PhysRevLett.117.160801}
      {\bibfield  {journal} {\bibinfo  {journal} {Phys. Rev. Lett.}\ }\textbf
      {\bibinfo {volume} {117}},\ \bibinfo {pages} {160801} (\bibinfo {year}
      {2016})}\BibitemShut {NoStop}%
    \bibitem [{\citenamefont {Liu}\ and\ \citenamefont
      {Yuan}(2017{\natexlab{b}})}]{liu2017control}%
      \BibitemOpen
      \bibfield  {author} {\bibinfo {author} {\bibfnamefont {J.}~\bibnamefont
      {Liu}}\ and\ \bibinfo {author} {\bibfnamefont {H.}~\bibnamefont {Yuan}},\
      }\href {https://link.aps.org/doi/10.1103/PhysRevA.96.042114} {\bibfield
      {journal} {\bibinfo  {journal} {Phys. Rev. A}\ }\textbf {\bibinfo {volume}
      {96}},\ \bibinfo {pages} {042114} (\bibinfo {year}
      {2017}{\natexlab{b}})}\BibitemShut {NoStop}%
    \bibitem [{\citenamefont {Khaneja}\ \emph {et~al.}(2005)\citenamefont
      {Khaneja}, \citenamefont {Reiss}, \citenamefont {Kehlet}, \citenamefont
      {Schulte-Herbr{\"u}ggen},\ and\ \citenamefont {Glaser}}]{Khaneja2005Optimal}%
      \BibitemOpen
      \bibfield  {author} {\bibinfo {author} {\bibfnamefont {N.}~\bibnamefont
      {Khaneja}}, \bibinfo {author} {\bibfnamefont {T.}~\bibnamefont {Reiss}},
      \bibinfo {author} {\bibfnamefont {C.}~\bibnamefont {Kehlet}}, \bibinfo
      {author} {\bibfnamefont {T.}~\bibnamefont {Schulte-Herbr{\"u}ggen}}, \ and\
      \bibinfo {author} {\bibfnamefont {S.~J.}\ \bibnamefont {Glaser}},\ }\href
      {\doibase 10.1016/j.jmr.2004.11.004} {\bibfield  {journal} {\bibinfo
      {journal} {J. Magn. Reson.}\ }\textbf {\bibinfo {volume} {172}},\ \bibinfo
      {pages} {296} (\bibinfo {year} {2005})}\BibitemShut {NoStop}%
    \bibitem [{\citenamefont {Mnih}\ \emph {et~al.}(2013)\citenamefont {Mnih},
      \citenamefont {Kavukcuoglu}, \citenamefont {Silver}, \citenamefont {Graves},
      \citenamefont {Antonoglou}, \citenamefont {Wierstra},\ and\ \citenamefont
      {Riedmiller}}]{mnih2013playing}%
      \BibitemOpen
      \bibfield  {author} {\bibinfo {author} {\bibfnamefont {V.}~\bibnamefont
      {Mnih}}, \bibinfo {author} {\bibfnamefont {K.}~\bibnamefont {Kavukcuoglu}},
      \bibinfo {author} {\bibfnamefont {D.}~\bibnamefont {Silver}}, \bibinfo
      {author} {\bibfnamefont {A.}~\bibnamefont {Graves}}, \bibinfo {author}
      {\bibfnamefont {I.}~\bibnamefont {Antonoglou}}, \bibinfo {author}
      {\bibfnamefont {D.}~\bibnamefont {Wierstra}}, \ and\ \bibinfo {author}
      {\bibfnamefont {M.}~\bibnamefont {Riedmiller}},\ }\href
      {https://arxiv.org/abs/1312.5602} {\bibfield  {journal} {\bibinfo  {journal}
      {arXiv:1312.5602}}}\BibitemShut
      {NoStop}%
    \bibitem [{\citenamefont {Mnih}\ \emph {et~al.}(2015)\citenamefont {Mnih},
      \citenamefont {Kavukcuoglu}, \citenamefont {Silver}, \citenamefont {Rusu},
      \citenamefont {Veness}, \citenamefont {Bellemare}, \citenamefont {Graves},
      \citenamefont {Riedmiller}, \citenamefont {Fidjeland}, \citenamefont
      {Ostrovski} \emph {et~al.}}]{mnih2015human}%
      \BibitemOpen
      \bibfield  {author} {\bibinfo {author} {\bibfnamefont {V.}~\bibnamefont
      {Mnih}}, \bibinfo {author} {\bibfnamefont {K.}~\bibnamefont {Kavukcuoglu}},
      \bibinfo {author} {\bibfnamefont {D.}~\bibnamefont {Silver}}, \bibinfo
      {author} {\bibfnamefont {A.~A.}\ \bibnamefont {Rusu}}, \bibinfo {author}
      {\bibfnamefont {J.}~\bibnamefont {Veness}}, \bibinfo {author} {\bibfnamefont
      {M.~G.}\ \bibnamefont {Bellemare}}, \bibinfo {author} {\bibfnamefont
      {A.}~\bibnamefont {Graves}}, \bibinfo {author} {\bibfnamefont
      {M.}~\bibnamefont {Riedmiller}}, \bibinfo {author} {\bibfnamefont {A.~K.}\
      \bibnamefont {Fidjeland}}, \bibinfo {author} {\bibfnamefont {G.}~\bibnamefont
      {Ostrovski}},  \emph {et~al.},\ }\href {\doibase 10.1038/nature14236}
      {\bibfield  {journal} {\bibinfo  {journal} {Nature (London)}\ }\textbf {\bibinfo
      {volume} {518}},\ \bibinfo {pages} {529} (\bibinfo {year}
      {2015})}\BibitemShut {NoStop}%
    \bibitem [{\citenamefont {Silver}\ \emph {et~al.}(2016)\citenamefont {Silver},
      \citenamefont {Huang}, \citenamefont {Maddison}, \citenamefont {Guez},
      \citenamefont {Sifre}, \citenamefont {van~den Driessche}, \citenamefont
      {Schrittwieser}, \citenamefont {Antonoglou}, \citenamefont {Panneershelvam},
      \citenamefont {Lanctot}, \citenamefont {Dieleman}, \citenamefont {Grewe},
      \citenamefont {Nham}, \citenamefont {Kalchbrenner}, \citenamefont
      {Sutskever}, \citenamefont {Lillicrap}, \citenamefont {Leach}, \citenamefont
      {Kavukcuoglu}, \citenamefont {Graepel},\ and\ \citenamefont
      {Hassabis}}]{silver2016mastering}%
      \BibitemOpen
      \bibfield  {author} {\bibinfo {author} {\bibfnamefont {D.}~\bibnamefont
      {Silver}}, \bibinfo {author} {\bibfnamefont {A.}~\bibnamefont {Huang}},
      \bibinfo {author} {\bibfnamefont {C.~J.}\ \bibnamefont {Maddison}}, \bibinfo
      {author} {\bibfnamefont {A.}~\bibnamefont {Guez}}, \bibinfo {author}
      {\bibfnamefont {L.}~\bibnamefont {Sifre}}, \bibinfo {author} {\bibfnamefont
      {G.}~\bibnamefont {van~den Driessche}}, \bibinfo {author} {\bibfnamefont
      {J.}~\bibnamefont {Schrittwieser}}, \bibinfo {author} {\bibfnamefont
      {I.}~\bibnamefont {Antonoglou}}, \bibinfo {author} {\bibfnamefont
      {V.}~\bibnamefont {Panneershelvam}}, \bibinfo {author} {\bibfnamefont
      {M.}~\bibnamefont {Lanctot}}, \bibinfo {author} {\bibfnamefont
      {S.}~\bibnamefont {Dieleman}}, \bibinfo {author} {\bibfnamefont
      {D.}~\bibnamefont {Grewe}}, \bibinfo {author} {\bibfnamefont
      {J.}~\bibnamefont {Nham}}, \bibinfo {author} {\bibfnamefont {N.}~\bibnamefont
      {Kalchbrenner}}, \bibinfo {author} {\bibfnamefont {I.}~\bibnamefont
      {Sutskever}}, \bibinfo {author} {\bibfnamefont {T.}~\bibnamefont
      {Lillicrap}}, \bibinfo {author} {\bibfnamefont {M.}~\bibnamefont {Leach}},
      \bibinfo {author} {\bibfnamefont {K.}~\bibnamefont {Kavukcuoglu}}, \bibinfo
      {author} {\bibfnamefont {T.}~\bibnamefont {Graepel}}, \ and\ \bibinfo
      {author} {\bibfnamefont {D.}~\bibnamefont {Hassabis}},\ }\href
      {https://www.nature.com/articles/nature16961} {\bibfield  {journal} {\bibinfo
       {journal} {Nature (London)}\ }\textbf {\bibinfo {volume} {529}},\ \bibinfo {pages}
      {484} (\bibinfo {year} {2016})}\BibitemShut {NoStop}%
    \bibitem [{\citenamefont {F\"osel}\ \emph {et~al.}(2018)\citenamefont
      {F\"osel}, \citenamefont {Tighineanu}, \citenamefont {Weiss},\ and\
      \citenamefont {Marquardt}}]{Thomas2018Reinforcement}%
      \BibitemOpen
      \bibfield  {author} {\bibinfo {author} {\bibfnamefont {T.}~\bibnamefont
      {F\"osel}}, \bibinfo {author} {\bibfnamefont {P.}~\bibnamefont {Tighineanu}},
      \bibinfo {author} {\bibfnamefont {T.}~\bibnamefont {Weiss}}, \ and\ \bibinfo
      {author} {\bibfnamefont {F.}~\bibnamefont {Marquardt}},\ }\href {\doibase
      10.1103/PhysRevX.8.031084} {\bibfield  {journal} {\bibinfo  {journal} {Phys.
      Rev. X}\ }\textbf {\bibinfo {volume} {8}},\ \bibinfo {pages} {031084}
      (\bibinfo {year} {2018})}\BibitemShut {NoStop}%
    \bibitem [{\citenamefont {Bukov}\ \emph {et~al.}(2018)\citenamefont {Bukov},
      \citenamefont {Day}, \citenamefont {Sels}, \citenamefont {Weinberg},
      \citenamefont {Polkovnikov},\ and\ \citenamefont
      {Mehta}}]{Bukov2018Reinforcement}%
      \BibitemOpen
      \bibfield  {author} {\bibinfo {author} {\bibfnamefont {M.}~\bibnamefont
      {Bukov}}, \bibinfo {author} {\bibfnamefont {A.~G.~R.}\ \bibnamefont {Day}},
      \bibinfo {author} {\bibfnamefont {D.}~\bibnamefont {Sels}}, \bibinfo {author}
      {\bibfnamefont {P.}~\bibnamefont {Weinberg}}, \bibinfo {author}
      {\bibfnamefont {A.}~\bibnamefont {Polkovnikov}}, \ and\ \bibinfo {author}
      {\bibfnamefont {P.}~\bibnamefont {Mehta}},\ }\href {\doibase
      10.1103/PhysRevX.8.031086} {\bibfield  {journal} {\bibinfo  {journal} {Phys.
      Rev. X}\ }\textbf {\bibinfo {volume} {8}},\ \bibinfo {pages} {031086}
      (\bibinfo {year} {2018})}\BibitemShut {NoStop}%
    \bibitem [{\citenamefont {Niu}\ \emph {et~al.}(2019)\citenamefont {Niu},
      \citenamefont {Boixo}, \citenamefont {Smelyanskiy},\ and\ \citenamefont
      {Neven}}]{niu2019universal}%
      \BibitemOpen
      \bibfield  {author} {\bibinfo {author} {\bibfnamefont {M.~Y.}\ \bibnamefont
      {Niu}}, \bibinfo {author} {\bibfnamefont {S.}~\bibnamefont {Boixo}}, \bibinfo
      {author} {\bibfnamefont {V.~N.}\ \bibnamefont {Smelyanskiy}}, \ and\ \bibinfo
      {author} {\bibfnamefont {H.}~\bibnamefont {Neven}},\ }\href
      {https://www.nature.com/articles/s41534-019-0141-3} {\bibfield  {journal}
      {\bibinfo  {journal} {npj Quantum Inf.}\ }\textbf {\bibinfo {volume} {5}},\
      \bibinfo {pages} {33} (\bibinfo {year} {2019})}\BibitemShut {NoStop}%
    \bibitem [{\citenamefont {Zhang}\ \emph {et~al.}(2019)\citenamefont {Zhang},
      \citenamefont {Wei}, \citenamefont {Asad}, \citenamefont {Yang},\ and\
      \citenamefont {Wang}}]{zhang2019does}%
      \BibitemOpen
      \bibfield  {author} {\bibinfo {author} {\bibfnamefont {X.-M.}\ \bibnamefont
      {Zhang}}, \bibinfo {author} {\bibfnamefont {Z.}~\bibnamefont {Wei}}, \bibinfo
      {author} {\bibfnamefont {R.}~\bibnamefont {Asad}}, \bibinfo {author}
      {\bibfnamefont {X.-C.}\ \bibnamefont {Yang}}, \ and\ \bibinfo {author}
      {\bibfnamefont {X.}~\bibnamefont {Wang}},\ }\href
      {https://www.nature.com/articles/s41534-019-0201-8} {\bibfield  {journal}
      {\bibinfo  {journal} {npj Quantum Inf.}\ }\textbf {\bibinfo {volume} {5}},\
      \bibinfo {pages} {85} (\bibinfo {year} {2019})}\BibitemShut {NoStop}%
    \bibitem [{\citenamefont {Lin}\ \emph {et~al.}(2020)\citenamefont {Lin},
      \citenamefont {Lai},\ and\ \citenamefont {Li}}]{lin2020quantum}%
      \BibitemOpen
      \bibfield  {author} {\bibinfo {author} {\bibfnamefont {J.}~\bibnamefont
      {Lin}}, \bibinfo {author} {\bibfnamefont {Z.~Y.}\ \bibnamefont {Lai}}, \ and\
      \bibinfo {author} {\bibfnamefont {X.}~\bibnamefont {Li}},\ }\href
      {https://link.aps.org/doi/10.1103/PhysRevA.101.052327} {\bibfield  {journal}
      {\bibinfo  {journal} {Phys. Rev. A}\ }\textbf {\bibinfo {volume} {101}},\
      \bibinfo {pages} {052327} (\bibinfo {year} {2020})}\BibitemShut {NoStop}%
    \bibitem [{\citenamefont {Xu}\ \emph {et~al.}(2019)\citenamefont {Xu},
      \citenamefont {Li}, \citenamefont {Liu}, \citenamefont {Wang}, \citenamefont
      {Yuan},\ and\ \citenamefont {Wang}}]{xu2019generalizable}%
      \BibitemOpen
      \bibfield  {author} {\bibinfo {author} {\bibfnamefont {H.}~\bibnamefont
      {Xu}}, \bibinfo {author} {\bibfnamefont {J.}~\bibnamefont {Li}}, \bibinfo
      {author} {\bibfnamefont {L.}~\bibnamefont {Liu}}, \bibinfo {author}
      {\bibfnamefont {Y.}~\bibnamefont {Wang}}, \bibinfo {author} {\bibfnamefont
      {H.}~\bibnamefont {Yuan}}, \ and\ \bibinfo {author} {\bibfnamefont
      {X.}~\bibnamefont {Wang}},\ }\href
      {https://www.nature.com/articles/s41534-019-0198-z} {\bibfield  {journal}
      {\bibinfo  {journal} {npj Quantum Inf.}\ }\textbf {\bibinfo {volume} {5}},\
      \bibinfo {pages} {82} (\bibinfo {year} {2019})}\BibitemShut {NoStop}%
    \bibitem [{\citenamefont {Schuff}\ \emph {et~al.}(2020)\citenamefont {Schuff},
      \citenamefont {Fiderer},\ and\ \citenamefont {Braun}}]{schuff2020improving}%
      \BibitemOpen
      \bibfield  {author} {\bibinfo {author} {\bibfnamefont {J.}~\bibnamefont
      {Schuff}}, \bibinfo {author} {\bibfnamefont {L.~J.}\ \bibnamefont {Fiderer}},
      \ and\ \bibinfo {author} {\bibfnamefont {D.}~\bibnamefont {Braun}},\ }\href
      {https://iopscience.iop.org/article/10.1088/1367-2630/ab6f1f} {\bibfield
      {journal} {\bibinfo  {journal} {New J. Phys.}\ }\textbf {\bibinfo {volume}
      {22}},\ \bibinfo {pages} {035001} (\bibinfo {year} {2020})}\BibitemShut
      {NoStop}%
    \bibitem [{\citenamefont {Fiderer}\ \emph {et~al.}(2021)\citenamefont
      {Fiderer}, \citenamefont {Schuff},\ and\ \citenamefont
      {Braun}}]{fiderer2020neural}%
      \BibitemOpen
      \bibfield  {author} {\bibinfo {author} {\bibfnamefont {L.~J.}\ \bibnamefont
      {Fiderer}}, \bibinfo {author} {\bibfnamefont {J.}~\bibnamefont {Schuff}}, \
      and\ \bibinfo {author} {\bibfnamefont {D.}~\bibnamefont {Braun}},\ }\href
      {\doibase 10.1103/PRXQuantum.2.020303} {\bibfield  {journal} {\bibinfo
      {journal} {PRX Quantum}\ }\textbf {\bibinfo {volume} {2}},\ \bibinfo {pages}
      {020303} (\bibinfo {year} {2021})}\BibitemShut {NoStop}%
    \bibitem [{\citenamefont {Sutton}\ and\ \citenamefont
      {Barto}(2018)}]{sutton2018reinforcement}%
      \BibitemOpen
      \bibfield  {author} {\bibinfo {author} {\bibfnamefont {R.~S.}\ \bibnamefont
      {Sutton}}\ and\ \bibinfo {author} {\bibfnamefont {A.~G.}\ \bibnamefont
      {Barto}},\ }\href@noop {} {\emph {\bibinfo {title} {Reinforcement Learning:
      An Introduction}}}\ (\bibinfo  {publisher} {MIT Press, Cambridge},\ \bibinfo {year}
      {2018})\BibitemShut {NoStop}%
    \bibitem [{\citenamefont {{Tobin}}\ \emph {et~al.}(2017)\citenamefont
      {{Tobin}}, \citenamefont {{Fong}}, \citenamefont {{Ray}}, \citenamefont
      {{Schneider}}, \citenamefont {{Zaremba}},\ and\ \citenamefont
      {{Abbeel}}}]{Tobin2017domain}%
      \BibitemOpen
      \bibfield  {author} {\bibinfo {author} {\bibfnamefont {J.}~\bibnamefont
      {{Tobin}}}, \bibinfo {author} {\bibfnamefont {R.}~\bibnamefont {{Fong}}},
      \bibinfo {author} {\bibfnamefont {A.}~\bibnamefont {{Ray}}}, \bibinfo
      {author} {\bibfnamefont {J.}~\bibnamefont {{Schneider}}}, \bibinfo {author}
      {\bibfnamefont {W.}~\bibnamefont {{Zaremba}}}, \ and\ \bibinfo {author}
      {\bibfnamefont {P.}~\bibnamefont {{Abbeel}}},\ }in\ \href {\doibase
      10.1109/IROS.2017.8202133} {\emph {\bibinfo {booktitle} {Proceedings of
      the 2017 IEEE/RSJ
      International Conference on Intelligent Robots and Systems (IROS), Vancouver,
      2017}}}\
      (IEEE, Piscataway, \bibinfo {year} {2017})\ pp.\ \bibinfo {pages} {23--30}\BibitemShut
      {NoStop}%
    \bibitem [{\citenamefont {Lee}\ \emph {et~al.}(2020)\citenamefont {Lee},
      \citenamefont {Lee}, \citenamefont {Shin},\ and\ \citenamefont
      {Lee}}]{lee2019network}%
      \BibitemOpen
      \bibfield  {author} {\bibinfo {author} {\bibfnamefont {K.}~\bibnamefont
      {Lee}}, \bibinfo {author} {\bibfnamefont {K.}~\bibnamefont {Lee}}, \bibinfo
      {author} {\bibfnamefont {J.}~\bibnamefont {Shin}}, \ and\ \bibinfo {author}
      {\bibfnamefont {H.}~\bibnamefont {Lee}},\ }in\ \href
      {https://openreview.net/forum?id=HJgcvJBFvB} {\emph {\bibinfo {booktitle}
      {International Conference on Learning Representations (ICLR)}}}\ (\bibinfo
      {year} {2020})\BibitemShut {NoStop}%
    \bibitem [{sut()}]{sutton1996generalization}%
      \BibitemOpen
      \href@noop {} {}\bibinfo {note} {Sutton, Richard S,
      \href{http://papers.nips.cc/paper/1109-generalization-in-reinforcement-learning-successful-examples-using-sparse-coarse-coding.pdf}{\textit{Generalization
      in reinforcement learning: Successful examples using sparse coarse coding}}
      \rm{in} {\textit{Advances in Neural Information Processing Systems 8}},
      edited by D. S. Touretzky and M. C. Mozer and M. E. Hasselmo (MIT Press, Cambridge, 1996) pp. 1038--1044}\BibitemShut {NoStop}%
    \bibitem [{\citenamefont {Cobbe}\ \emph {et~al.}(2019)\citenamefont {Cobbe},
      \citenamefont {Klimov}, \citenamefont {Hesse}, \citenamefont {Kim},\ and\
      \citenamefont {Schulman}}]{Cobbe2018Quantifying}%
      \BibitemOpen
      \bibfield  {author} {\bibinfo {author} {\bibfnamefont {K.}~\bibnamefont
      {Cobbe}}, \bibinfo {author} {\bibfnamefont {O.}~\bibnamefont {Klimov}},
      \bibinfo {author} {\bibfnamefont {C.}~\bibnamefont {Hesse}}, \bibinfo
      {author} {\bibfnamefont {T.}~\bibnamefont {Kim}}, \ and\ \bibinfo {author}
      {\bibfnamefont {J.}~\bibnamefont {Schulman}},\ }in\ \href
      {http://proceedings.mlr.press/v97/cobbe19a.html} {\emph {\bibinfo {booktitle}
      {Proceedings of the 36th International Conference on Machine Learning}}},\
      \bibinfo {series} {Proceedings of Machine Learning Research}, Vol.~\bibinfo
      {volume} {97},\ \bibinfo {editor} {edited by\ \bibinfo {editor}
      {\bibfnamefont {K.}~\bibnamefont {Chaudhuri}}\ and\ \bibinfo {editor}
      {\bibfnamefont {R.}~\bibnamefont {Salakhutdinov}}}\ (\bibinfo  {publisher}
      {PMLR},\ \bibinfo {year} {2019})\ pp.\ \bibinfo {pages}
      {1282--1289}\BibitemShut {NoStop}%
    \bibitem [{\citenamefont {Zhang}\ \emph {et~al.}(2018)\citenamefont {Zhang},
      \citenamefont {Ballas},\ and\ \citenamefont {Pineau}}]{zhang2018dissection}%
      \BibitemOpen
      \bibfield  {author} {\bibinfo {author} {\bibfnamefont {A.}~\bibnamefont
      {Zhang}}, \bibinfo {author} {\bibfnamefont {N.}~\bibnamefont {Ballas}}, \
      and\ \bibinfo {author} {\bibfnamefont {J.}~\bibnamefont {Pineau}},\ }\href
      {https://arxiv.org/abs/1806.07937} {\bibfield  {journal} {\bibinfo  {journal}
      {arXiv:1806.07937}}}\BibitemShut
      {NoStop}%
    \bibitem [{\citenamefont {Packer}\ \emph {et~al.}(2019)\citenamefont {Packer},
      \citenamefont {Gao}, \citenamefont {Kos}, \citenamefont {Kr\"{a}henb\"{u}hl},
      \citenamefont {Koltun},\ and\ \citenamefont {Song}}]{packer2019assessing}%
      \BibitemOpen
      \bibfield  {author} {\bibinfo {author} {\bibfnamefont {C.}~\bibnamefont
      {Packer}}, \bibinfo {author} {\bibfnamefont {K.}~\bibnamefont {Gao}},
      \bibinfo {author} {\bibfnamefont {J.}~\bibnamefont {Kos}}, \bibinfo {author}
      {\bibfnamefont {P.}~\bibnamefont {Kr\"{a}henb\"{u}hl}}, \bibinfo {author}
      {\bibfnamefont {V.}~\bibnamefont {Koltun}}, \ and\ \bibinfo {author}
      {\bibfnamefont {D.}~\bibnamefont {Song}},\ }\href
      {https://arxiv.org/abs/1810.12282} {\bibfield  {journal} {\bibinfo  {journal}
      {arXiv:1810.12282}}}\BibitemShut
      {NoStop}%
    \bibitem [{\citenamefont {Pathak}\ \emph {et~al.}(2017)\citenamefont {Pathak},
      \citenamefont {Agrawal}, \citenamefont {Efros},\ and\ \citenamefont
      {Darrell}}]{Pathak2017}%
      \BibitemOpen
      \bibfield  {author} {\bibinfo {author} {\bibfnamefont {D.}~\bibnamefont
      {Pathak}}, \bibinfo {author} {\bibfnamefont {P.}~\bibnamefont {Agrawal}},
      \bibinfo {author} {\bibfnamefont {A.~A.}\ \bibnamefont {Efros}}, \ and\
      \bibinfo {author} {\bibfnamefont {T.}~\bibnamefont {Darrell}},\ }in\ \href
      {https://arxiv.org/abs/1705.05363} {\emph {\bibinfo {booktitle} {Proceedings
      of the IEEE Conference on Computer Vision and Pattern Recognition
      Workshops, Honolulu, 2017}}}\ (IEEE, Piscataway,
      \bibinfo {year} {2017}),\ pp.\ \bibinfo {pages}
      {16--17}\BibitemShut {NoStop}%
    \bibitem [{\citenamefont {Breuer}\ and\ \citenamefont
      {Petruccione}(2002)}]{breuer2002theory}%
      \BibitemOpen
      \bibfield  {author} {\bibinfo {author} {\bibfnamefont {H.-P.}\ \bibnamefont
      {Breuer}}\ and\ \bibinfo {author} {\bibfnamefont {F.}~\bibnamefont
      {Petruccione}},\ }\href@noop {} {\emph {\bibinfo {title} {The Theory of Open
      Quantum Systems}}}\ (\bibinfo  {publisher} {Oxford University Press, Oxford},\
      \bibinfo {year} {2002})\BibitemShut {NoStop}%
    \bibitem [{\citenamefont {Liu}\ \emph {et~al.}(2020)\citenamefont {Liu},
      \citenamefont {Yuan}, \citenamefont {Lu},\ and\ \citenamefont
      {Wang}}]{liu2019quantum}%
      \BibitemOpen
      \bibfield  {author} {\bibinfo {author} {\bibfnamefont {J.}~\bibnamefont
      {Liu}}, \bibinfo {author} {\bibfnamefont {H.}~\bibnamefont {Yuan}}, \bibinfo
      {author} {\bibfnamefont {X.-M.}\ \bibnamefont {Lu}}, \ and\ \bibinfo {author}
      {\bibfnamefont {X.}~\bibnamefont {Wang}},\ }\href
      {https://iopscience.iop.org/article/10.1088/1751-8121/ab5d4d} {\bibfield
      {journal} {\bibinfo  {journal} {J. Phys. A: Math. Theor.}\ }\textbf {\bibinfo {volume}
      {53}},\ \bibinfo {pages} {023001} (\bibinfo {year} {2020})}\BibitemShut
      {NoStop}%
    \bibitem [{\citenamefont {Ruder}(2016)}]{ruder2016overview}%
      \BibitemOpen
      \bibfield  {author} {\bibinfo {author} {\bibfnamefont {S.}~\bibnamefont
      {Ruder}},\ }\href {https://arxiv.org/abs/1609.04747} {\bibfield  {journal}
      {\bibinfo  {journal} {arXiv:1609.04747}}}\BibitemShut {NoStop}%
    \bibitem [{\citenamefont {Kingma}\ and\ \citenamefont
      {Ba}(2014)}]{kingma2014adam}%
      \BibitemOpen
      \bibfield  {author} {\bibinfo {author} {\bibfnamefont {D.~P.}\ \bibnamefont
      {Kingma}}\ and\ \bibinfo {author} {\bibfnamefont {J.}~\bibnamefont {Ba}},\
      }\href {https://arxiv.org/abs/1412.6980} {\bibfield  {journal} {\bibinfo
      {journal} {arXiv:1412.6980}}}\BibitemShut {NoStop}%
    \bibitem [{sil()}]{silver2014deterministic}%
      \BibitemOpen
      \href@noop {} {}\bibinfo {note} {D. Silver, G. Lever, N. Heess, T. Degris, D. Wierstra, and M. Riedmiller,
      \rm{in} \href{http://proceedings.mlr.press/v32/silver14.pdf} {\textit{Proceedings of the 31st
      International Conference on Machine Learning, Beijing, 2014}}, edited by Eric P. Xing and
      Tony Jebara (Microtome, Brookline, 2014) pp. 387--395}\BibitemShut {NoStop}%
    \bibitem [{\citenamefont {Lin}(1993)}]{lin1993reinforcement}%
      \BibitemOpen
      \bibfield  {author} {\bibinfo {author} {\bibfnamefont {L.-J.}\ \bibnamefont
      {Lin}},\ }\href@noop {} {\emph {\bibinfo {title} {Reinforcement learning for
      robots using neural networks}}},\ \bibinfo {type} {Tech. Rep.}\ (\bibinfo
      {institution} {Carnegie-Mellon Univ Pittsburgh PA School of Computer
      Science},\ \bibinfo {year} {1993})\BibitemShut {NoStop}%
    \bibitem [{\citenamefont {Lillicrap}\ \emph {et~al.}(2015)\citenamefont
      {Lillicrap}, \citenamefont {Hunt}, \citenamefont {Pritzel}, \citenamefont
      {Heess}, \citenamefont {Erez}, \citenamefont {Tassa}, \citenamefont
      {Silver},\ and\ \citenamefont {Wierstra}}]{lillicrap2015continuous}%
      \BibitemOpen
      \bibfield  {author} {\bibinfo {author} {\bibfnamefont {T.~P.}\ \bibnamefont
      {Lillicrap}}, \bibinfo {author} {\bibfnamefont {J.~J.}\ \bibnamefont {Hunt}},
      \bibinfo {author} {\bibfnamefont {A.}~\bibnamefont {Pritzel}}, \bibinfo
      {author} {\bibfnamefont {N.}~\bibnamefont {Heess}}, \bibinfo {author}
      {\bibfnamefont {T.}~\bibnamefont {Erez}}, \bibinfo {author} {\bibfnamefont
      {Y.}~\bibnamefont {Tassa}}, \bibinfo {author} {\bibfnamefont
      {D.}~\bibnamefont {Silver}}, \ and\ \bibinfo {author} {\bibfnamefont
      {D.}~\bibnamefont {Wierstra}},\ }\href {https://arxiv.org/abs/1509.02971}
      {\bibfield  {journal} {\bibinfo  {journal} {arXiv:1509.02971}}}\BibitemShut {NoStop}%
    \bibitem [{\citenamefont {Leonhardt}(1997)}]{leonhardt1997measuring}%
      \BibitemOpen
      \bibfield  {author} {\bibinfo {author} {\bibfnamefont {U.}~\bibnamefont
      {Leonhardt}},\ }\href@noop {} {\emph {\bibinfo {title} {Measuring the Quantum
      State of Light}}},\ (\bibinfo  {publisher}
      {Cambridge University Press, Cambridge},\ \bibinfo {year} {1997}), Vol.~\bibinfo {volume} {22}\BibitemShut {NoStop}%
    \bibitem [{\citenamefont {James}\ \emph {et~al.}(2001)\citenamefont {James},
      \citenamefont {Kwiat}, \citenamefont {Munro},\ and\ \citenamefont
      {White}}]{James2001measurement}%
      \BibitemOpen
      \bibfield  {author} {\bibinfo {author} {\bibfnamefont {D.~F.~V.}\
      \bibnamefont {James}}, \bibinfo {author} {\bibfnamefont {P.~G.}\ \bibnamefont
      {Kwiat}}, \bibinfo {author} {\bibfnamefont {W.~J.}\ \bibnamefont {Munro}}, \
      and\ \bibinfo {author} {\bibfnamefont {A.~G.}\ \bibnamefont {White}},\ }\href
      {\doibase 10.1103/PhysRevA.64.052312} {\bibfield  {journal} {\bibinfo
      {journal} {Phys. Rev. A}\ }\textbf {\bibinfo {volume} {64}},\ \bibinfo
      {pages} {052312} (\bibinfo {year} {2001})}\BibitemShut {NoStop}%
    \bibitem [{\citenamefont {Vandersypen}\ and\ \citenamefont
      {Chuang}(2005)}]{Vandersypen2005NMR}%
      \BibitemOpen
      \bibfield  {author} {\bibinfo {author} {\bibfnamefont {L.~M.~K.}\
      \bibnamefont {Vandersypen}}\ and\ \bibinfo {author} {\bibfnamefont {I.~L.}\
      \bibnamefont {Chuang}},\ }\href {\doibase 10.1103/RevModPhys.76.1037}
      {\bibfield  {journal} {\bibinfo  {journal} {Rev. Mod. Phys.}\ }\textbf
      {\bibinfo {volume} {76}},\ \bibinfo {pages} {1037} (\bibinfo {year}
      {2005})}\BibitemShut {NoStop}%
    \bibitem [{\citenamefont {Paszke}\ \emph {et~al.}(2019)\citenamefont {Paszke},
      \citenamefont {Gross}, \citenamefont {Massa}, \citenamefont {Lerer},
      \citenamefont {Bradbury}, \citenamefont {Chanan}, \citenamefont {Killeen},
      \citenamefont {Lin}, \citenamefont {Gimelshein}, \citenamefont {Antiga},
      \citenamefont {Desmaison}, \citenamefont {Kopf}, \citenamefont {Yang},
      \citenamefont {DeVito}, \citenamefont {Raison}, \citenamefont {Tejani},
      \citenamefont {Chilamkurthy}, \citenamefont {Steiner}, \citenamefont {Fang},
      \citenamefont {Bai},\ and\ \citenamefont {Chintala}}]{PyTorch2019}%
      \BibitemOpen
      \bibfield  {author} {\bibinfo {author} {\bibfnamefont {A.}~\bibnamefont
      {Paszke}}, \bibinfo {author} {\bibfnamefont {S.}~\bibnamefont {Gross}},
      \bibinfo {author} {\bibfnamefont {F.}~\bibnamefont {Massa}}, \bibinfo
      {author} {\bibfnamefont {A.}~\bibnamefont {Lerer}}, \bibinfo {author}
      {\bibfnamefont {J.}~\bibnamefont {Bradbury}}, \bibinfo {author}
      {\bibfnamefont {G.}~\bibnamefont {Chanan}}, \bibinfo {author} {\bibfnamefont
      {T.}~\bibnamefont {Killeen}}, \bibinfo {author} {\bibfnamefont
      {Z.}~\bibnamefont {Lin}}, \bibinfo {author} {\bibfnamefont {N.}~\bibnamefont
      {Gimelshein}}, \bibinfo {author} {\bibfnamefont {L.}~\bibnamefont {Antiga}}
      \emph{et~al.},\ }in\ \href
      {http://papers.neurips.cc/paper/9015-pytorch-an-imperative-style-high-performance-deep-learning-library.pdf}
      {\emph {\bibinfo {booktitle} {Advances in Neural Information Processing
      Systems 32: Proceedings of the 33rd Conference on Neural Information
      Processing Systems, Vancouver, 2019}}},\ \bibinfo {editor} {edited by\ \bibinfo {editor}
      {\bibfnamefont {H.}~\bibnamefont {Wallach}}, \bibinfo {editor} {\bibfnamefont
      {H.}~\bibnamefont {Larochelle}}, \bibinfo {editor} {\bibfnamefont
      {A.}~\bibnamefont {Beygelzimer}}, \bibinfo {editor} {\bibfnamefont
      {F.}~\bibnamefont {d'~Alch\'{e}-Buc}}, \bibinfo {editor} {\bibfnamefont
      {E.}~\bibnamefont {Fox}}, \ and\ \bibinfo {editor} {\bibfnamefont
      {R.}~\bibnamefont {Garnett}}}\ (\bibinfo  {publisher} {Curran, Red Hook},\ \bibinfo {year} {2019})\ pp.\ \bibinfo {pages}
      {8024--8035}\BibitemShut {NoStop}%
    \bibitem [{\citenamefont {Wu}\ \emph {et~al.}(2020)\citenamefont {Wu},
      \citenamefont {Jin}, \citenamefont {Wen},\ and\ \citenamefont
      {Wang}}]{wu.2020}%
      \BibitemOpen
      \bibfield  {author} {\bibinfo {author} {\bibfnamefont {S.}~\bibnamefont
      {Wu}}, \bibinfo {author} {\bibfnamefont {S.}~\bibnamefont {Jin}}, \bibinfo
      {author} {\bibfnamefont {D.}~\bibnamefont {Wen}}, \ and\ \bibinfo {author}
      {\bibfnamefont {X.}~\bibnamefont {Wang}},\ }\href
      {https://arxiv.org/abs/2012.10711} {\bibfield  {journal} {\bibinfo  {journal}
      {arXiv:2012.10711}}}\BibitemShut
      {NoStop}%
    \bibitem [{\citenamefont {Durkin}(2010)}]{durkin2010preferred}%
      \BibitemOpen
      \bibfield  {author} {\bibinfo {author} {\bibfnamefont {G.~A.}\ \bibnamefont
      {Durkin}},\ }\href
      {https://iopscience.iop.org/article/10.1088/1367-2630/12/2/023010} {\bibfield
       {journal} {\bibinfo  {journal} {New J. Phys.}\ }\textbf {\bibinfo {volume}
      {12}},\ \bibinfo {pages} {023010} (\bibinfo {year} {2010})}\BibitemShut
      {NoStop}%
    \bibitem [{\citenamefont {Degen}\ \emph {et~al.}(2017)\citenamefont {Degen},
      \citenamefont {Reinhard},\ and\ \citenamefont
      {Cappellaro}}]{degen2017sensing}%
      \BibitemOpen
      \bibfield  {author} {\bibinfo {author} {\bibfnamefont {C.~L.}\ \bibnamefont
      {Degen}}, \bibinfo {author} {\bibfnamefont {F.}~\bibnamefont {Reinhard}}, \
      and\ \bibinfo {author} {\bibfnamefont {P.}~\bibnamefont {Cappellaro}},\
      }\href {\doibase 10.1103/RevModPhys.89.035002} {\bibfield  {journal}
      {\bibinfo  {journal} {Rev. Mod. Phys.}\ }\textbf {\bibinfo {volume} {89}},\
      \bibinfo {pages} {035002} (\bibinfo {year} {2017})}\BibitemShut {NoStop}%
    \bibitem [{\citenamefont {Szczykulska}\ \emph {et~al.}(2016)\citenamefont
      {Szczykulska}, \citenamefont {Baumgratz},\ and\ \citenamefont
      {Datta}}]{szczykulska2016multi}%
      \BibitemOpen
      \bibfield  {author} {\bibinfo {author} {\bibfnamefont {M.}~\bibnamefont
      {Szczykulska}}, \bibinfo {author} {\bibfnamefont {T.}~\bibnamefont
      {Baumgratz}}, \ and\ \bibinfo {author} {\bibfnamefont {A.}~\bibnamefont
      {Datta}},\ }\href {https://doi.org/10.1080/23746149.2016.1230476} {\bibfield
      {journal} {\bibinfo  {journal} {Adv. Phys. X}\ }\textbf {\bibinfo {volume}
      {1}},\ \bibinfo {pages} {621} (\bibinfo {year} {2016})}\BibitemShut {NoStop}%
    \end{thebibliography}
%

\end{document}